\definecolor{TUMblue}{RGB}{0,101,189}
\definecolor{TUMbluedark}{RGB}{0,82,147}
\definecolor{TUMbluebright}{RGB}{152,198,234}
\definecolor{TUMbluemoderate}{RGB}{100,160,200}
\definecolor{TUMivory}{RGB}{218,215,203}
\definecolor{TUMgreen}{RGB}{90,160,20}
\definecolor{TUMorange}{RGB}{227,114,34}
\definecolor{TUMgray}{gray}{0.6}
\newcommand{\st}{\text{s.t.\ }}
\newcommand{\wloss}{\text{w.l.o.g.\ }}
\newcommand{\etal}{\text{et al.\ }}
\newcommand{\eps}{\varepsilon}
\newcommand{\R}{\mathds{R}}
\newcommand{\ubar}{\bar{u}}
\newcommand{\pjA}{p_j^A}
\newcommand{\pkA}{p_k^A}
\DeclareMathOperator*{\argmin}{argmin}
\DeclareMathOperator*{\ALG}{ALG}
\DeclareMathOperator*{\OPT}{OPT}
\title{Explorable Uncertainty in Scheduling with Non-Uniform Testing Times\thanks{Funded by the Deutsche Forschungsgemeinschaft (DFG, German Research Foundation) – 277991500/GRK2201, and by the European Research Council, Grant Agreement No.\ 691672.}}
\author{Susanne Albers\inst{1}\and Alexander Eckl\inst{1,2,}\thanks{Corresponding author, eMail: alexander.eckl@tum.de}}
\authorrunning{S. Albers and A. Eckl}
\institute{Department of Informatics, Technical University of Munich,\\ Boltzmannstr. 3, 85748 Garching, Germany\\ \email{albers@in.tum.de, alexander.eckl@tum.de}\\
\and
Advanced Optimization in a Networked Economy, Technical University of Munich,\\ Arcisstra\ss e 21, 80333 Munich, Germany}
\begin{document}
\maketitle
\begin{abstract}
	The problem of scheduling with testing in the framework of explorable uncertainty models environments where some preliminary action can influence the duration of a task. In the model, each job has an unknown processing time that can be revealed by running a test. Alternatively, jobs may be run untested for the duration of a given upper limit. Recently, D\"urr \etal \cite{Duerr2018} have studied the setting where all testing times are of unit size and have given lower and upper bounds for the objectives of minimizing the sum of completion times and the makespan on a single machine. In this paper, we extend the problem to non-uniform testing times and present the first competitive algorithms. The general setting is motivated for example by online user surveys for market prediction or querying centralized databases in distributed computing. Introducing general testing times gives the problem a new flavor and requires updated methods with new techniques in the analysis. We present constant competitive ratios for the objective of minimizing the sum of completion times in the deterministic case, both in the non-preemptive and preemptive setting. For the preemptive setting, we additionally give a first lower bound. We also present a randomized algorithm with improved competitive ratio. Furthermore, we give tight competitive ratios for the objective of minimizing the makespan, both in the deterministic and the randomized setting.
\keywords{Online Scheduling \and Explorable Uncertainty \and Competitive Analysis \and Single Machine \and Sum of Completion Times \and Makespan}
\end{abstract}

\section{Introduction}

In scheduling environments, uncertainty is a common consideration for optimization problems. Commonly, results are either based on worst case considerations or a random distribution over the input. These approaches are known as robust optimization and stochastic optimization, respectively. However, it is often the case that unknown information can be attained through investing some additional resources, e.g.\ time, computing power or money. In his seminal paper, Kahan \cite{Kahan1991} has first introduced the notion of explorable or queryable uncertainty to model obtaining additional information for a problem at a given cost during the runtime of an algorithm. Since then, these kind of problems have been explored in different optimization contexts, for example in the framework of combinatorial, geometric or function value optimization tasks. 

Recently, D\"urr \etal \cite{Duerr2018} have introduced a model for scheduling with testing on a single machine within the framework of explorable uncertainty. In their approach, a number of jobs with unknown processing times are given. Testing takes one unit of time and reveals the processing time. If a job is executed untested, the time it takes to run the job is given by an upper bound. The novelty of their approach lies in having tests executed directly on the machine running the jobs as opposed to considering tests separately. 

In view of this model, a natural extension is to consider non-uniform testing times to allow for a wider range of problems. D\"urr \etal state that for certain applications it is appropriate to consider a broader variation on testing times and leave this question up for future research.

Situations where a preliminary action, operation or test can be executed before a job are manifold and include a wide range of real-life applications. In the following, we discuss a small selection of such problems and emphasize cases with heterogeneous testing requirements. Consider first a situation where an online user survey can help predict market demand and production times. The time needed to produce the necessary amount of goods for the given demand is only known after conducting the survey. Depending on its scope and size, the invested costs for the survey may vary significantly.

As a second example, we look at distributed computing in a setting with many distributed local databases and one centralized master server. At the local stations, only estimates of some data values are stored; in order to obtain the true value one must query the master server. It depends on the distance and connection quality from any localized database to the master how much time and resources this requires. Olston and Widom \cite{Olston2000} have considered this setting in detail.

Another possible example is the acquisition of a house through an agent giving us more information about its value, location, condition, etc., but demanding a price for her services. This payment could vary based on the price of the house, the amount of work of the agent or the number of competitors.

In their paper, D\"urr \etal \cite{Duerr2018} mention fault diagnosis in maintenance and medical treatment, file compression for transmissions, and running jobs in an alternative fast mode whose availability can be determined through a test. Generally, any situation involving diverse cost and duration estimates, like e.g.\ in construction work, manufacturing or insurance, falls into our category of possible applications.

In view of all these examples, we investigate non-uniform testing in the scope of explorable uncertainty on a single machine as introduced by \cite{Duerr2018}. We study whether algorithms can be extended to this non-uniform case and if not, how we can find new methods for it.

\subsection{Problem Statement}

We consider $n$ jobs to be scheduled on a single machine. Every job $j$ has an unknown processing time $p_j$ and a known upper bound $u_j$. It holds $0 \le p_j \le u_j$ for all $j$. Each job also has a testing time $t_j \ge 0$. A job can either be executed untested, which takes time $u_j$, or be tested and then executed, which takes a total time of $t_j + p_j$. Note that a tested job does not necessarily have to be executed right after its test, it may be delayed arbitrarily while the algorithm tests or executes other jobs.

Since only the upper bounds are initially known to the algorithm, the task can be viewed as an online problem with an adaptive adversary. The actual processing times $p_j$ are only realized after job $j$ has been tested by the algorithm. In the randomized case, the adversary knows the distribution of the random input parameters of an algorithm, but not their outcome.

We denote the completion time of a job $j$ as $C_j$ and primarily consider the objective of minimizing the total sum of completion times $\sum_j C_j$. As a secondary objective, we also investigate the simpler goal of minimizing the makespan $\max_j C_j$. We use competitive analysis to compare the value produced by an algorithm with an optimal offline solution.

Clearly, in the offline setting where all processing times are known, an optimal schedule can be determined directly: If $t_j + p_j \le u_j$ then job $j$ is tested, otherwise it is run untested. For the sum of completion times, the jobs are therefore scheduled in order of non-decreasing $\min(t_j+p_j,u_j)$. Any algorithm for the online problem not only has to decide whether to test a given job or not, but also in which order to run all tests and executions of both untested and tested jobs. For a solution to the makespan objective, the ordering of the jobs does not matter and an optimal offline algorithm decides the testing by the same principle as above.

\subsection{Related Work}

Our setting is directly based on the problem of scheduling uncertain jobs on a single machine with explorable processing times, introduced by D\"urr \etal \cite{Duerr2018} in 2018. They only consider the special case where $t_j \equiv 1$ for all jobs. For deterministic algorithms, they give a lower bound of 1.8546 and an upper bound of 2. In the randomized case, they give a lower bound of 1.6257 and a 1.7453-competitive algorithm. For several deterministic special case instances, they provide upper bounds closer to the best possible ratio of 1.8546. Additionally, tight algorithms for the objective of minimizing the makespan are given for both the deterministic and randomized cases.

Testing and executing jobs on a single machine can be viewed as part of the research area of \emph{queryable uncertainty} or \emph{explorable uncertainty}. The first seminal paper on dealing with uncertainty by querying parts of the input was published in 1991 by Kahan \cite{Kahan1991}. In his paper, Kahan considers a set of elements with uncertain values that lie in a closed interval. He explores approximation guarantees for the number of queries necessary to obtain the maximum and median value of the uncertain elements.

Since then, there has been a large amount of research concerned with the objective of minimizing the number of queries to obtain a solution. A variety of numerical, geometric and combinatorial problems have been studied in this framework, the following is a selection of some of these publications: Next to Kahan, Feder \etal \cite{Feder2003}, Khanna and Tan \cite{Khanna2001}, and Gupta \etal \cite{Gupta2011} have also considered the objective of determining different function values, in particular the k-smallest value and the median. Bruce \etal \cite{Bruce2005} have analysed geometric tasks, specifically the Maximal Points and Convex Hull problems. They have also introduced the notion of \emph{witness sets} as a general concept for queryable uncertainty, which was then generalized by Erlebach \etal \cite{Erlebach2008}. Olston and Widom \cite{Olston2000} researched caching problems while allowing for some inaccuracy in the objective function. Other studied combinatorial problems include minimum spanning tree \cite{Erlebach2008,Megow2017}, shortest path \cite{Feder2007}, knapsack \cite{Goerigk2015} and boolean trees \cite{Charikar2002}. See also the survey by Erlebach and Hoffmann \cite{Erlebach2015} for an overview over research in this area.

A related type of problems within optimization under uncertainty are settings where the cost of the queries is a direct part of the objective function. Most notably, the paper by D\"urr \etal \cite{Duerr2018} falls into this category. There, the tests necessary to obtain additional information about the runtime of the jobs are executed on the same machine as the jobs themselves. Other examples include Weitzman's original Pandora's Box problem \cite{Weitzman1979}, where $n$ independent random variables are probed to maximize the highest revealed value. Every probing incurs a price directly subtracted from the objective function. Recently, Singla \cite{Singla2018} introduced the 'price of information' model to describe receiving information in exchange for a probing price. He gives approximation ratios for various well-known combinatorial problems with stochastic uncertainty. 

%
\subsection{Contribution}

In this paper, we provide the first algorithms for the more general scheduling with testing problem where testing times can be non-uniform. Consult Table \ref{tab:results_expl_uncer} for an overview of results for both the non-uniform and uniform versions of the problem. All ratios provided without citation are introduced in this paper. The remaining results are presented in \cite{Duerr2018}.

\begin{table}[htb]
	\centering
	\caption{Overview of results}
	\label{tab:results_expl_uncer}
	\begin{tabulary}{\textwidth}{C | C | C | C}
		\textbf{Objective Type} 		& \textbf{General tests}	& \textbf{Uniform tests}			& \textbf{Lower bound}				\\ \midrule
		$\sum C_j$ - deterministic		& $4$						& $2$ \cite{Duerr2018}				& $1.8546$ \cite{Duerr2018}			\\
		$\sum C_j$ - randomized			& $3.3794$					& $1.7453$ \cite{Duerr2018}			& $1.6257$ \cite{Duerr2018}			\\
		$\sum C_j$ - determ. preemptive	& $2 \varphi \approx 3.2361$& -									& $1.8546$							\\
		$\max C_j$ - deterministic		& $\varphi \approx 1.6180$	& $\varphi$	\cite{Duerr2018}		& $\varphi$ \cite{Duerr2018}		\\
		$\max C_j$ - randomized			& $\frac{4}{3}$				& $\frac{4}{3}$ \cite{Duerr2018}	& $\frac{4}{3}$ \cite{Duerr2018}	\\
	\end{tabulary}
\end{table}

For the problem of scheduling uncertain jobs with non-uniform testing times on a single machine, our results are the following: A deterministic 4-competitive algorithm for the objective of minimizing the sum of completion times and a randomized 3.3794-competitive algorithm for the same objective. If we allow preemption - that is, to cancel the execution of a job at any time and start working on a different job - then we can improve the deterministic case to be $2\varphi$-competitive. Here, $\varphi \approx 1.6180$ is the golden ratio.

For the objective of minimizing the makespan, we adopt and extend the ideas of D\"urr \etal \cite{Duerr2018} to provide a tight $\varphi$-competitive algorithm in the deterministic case and a tight $\frac{4}{3}$-competitive algorithm in the randomized case.

Our approaches handle non-uniform testing times in a novel fashion distinct from the methods of \cite{Duerr2018}. As we show in Appendix \ref{append:small_upper_limits}, the idea of scheduling untested jobs with small upper bounds in the beginning of the schedule, which works well in the uniform case, fails to generalize to non-uniform tests. Additionally, describing parameterized worst-case instances becomes intangible in the presence of an arbitrary number of different testing times.

In place of these methods, we compute job completion times by cross-e\-xa\-min\-ing contributions of other jobs in the schedule. We determine tests based on the ratio between the upper bound and the given test time and pay specific attention to sorting the involved executions and tests in an suitable way.

The paper is structured as follows: Sections \ref{sec:deterministic} and \ref{sec:randomized} examine the deterministic and randomized cases respectively. Various algorithms are presented and their competitive ratios proven. We extend the optimal results for the objective of minimizing the makespan from the uniform case to general testing times in Section \ref{sec:makespan}. Finally, we conclude with some open problems.

\section{Deterministic Setting}
\label{sec:deterministic}

In this section, we introduce our basic algorithm and prove deterministic upper bounds for the non-preemptive as well as the preemptive case. The basic structure introduced in Section \ref{subsec:basic_alg_and_proof} works as a framework for other algorithms presented later. We give a detailed analysis of the deterministic algorithm and prove that it is $4$-competitive if parameters are chosen accordingly. In Section \ref{subsec:det_preemption} we prove that an algorithm for the preemptive case is 3.2361-competitive and that no preemptive algorithm can have a ratio better than $1.8546$.

\subsection{Basic Algorithm and Proof of 4-Competitiveness}
\label{subsec:basic_alg_and_proof}

We now present the elemental framework of our algorithm, which we call \emph{$(\alpha,\beta)$-SORT}. As input, the algorithm has two real parameters, $\alpha \ge 1$ and $\beta \ge 1$.
\begin{algorithm}[ht]
	$T \leftarrow \emptyset$, $N \leftarrow \emptyset$, $\sigma_j \equiv 0$\;
	\ForEach{$j\in[m]$}{
		\eIf{$u_j \ge \alpha t_j$}
			{add $j$ to $T$\; set $\sigma_j \leftarrow \beta t_j$\;}
			{add $j$ to $N$\; set $\sigma_j \leftarrow u_j$\;}
	}
	\While{$N \cup T \neq \emptyset$}{
		choose $j_{\min} \in \argmin_{j \in N \cup T} \sigma_j$\;
		\uIf{$j_{\min} \in N$}{execute $j_{\min}$ untested\; remove $j_{\min}$ from $N$\;}
		\uElseIf{$j_{\min} \in T$}{
			\eIf{$j_{\min}$ not tested}
				{test $j_{\min}$\; set $\sigma_{j_{\min}} \leftarrow p_{j_{\min}}$\;}
				{execute $j_{\min}$\; remove $j_{\min}$ from $T$\;}
		}
	}
	\caption{$(\alpha,\beta)$-SORT}
	\label{alg:alpha_beta_sort}
\end{algorithm}

The algorithm is divided into two phases. First, we decide for each job whether we test this job or not based on the ratio $\frac{u_j}{t_j}$. This gives us a partition of $[m]$ into the disjoint sets ${T = \{j\ \in [m]: \text{ALG tests } j\}}$ and $N = \{j\ \in [m]: \text{ALG runs } j \text{ untested}\}$. In the second phase, we always attend to the job $j_{\min}$ with the current smallest \emph{scaling time} $\sigma_j$. The scaling time is the time needed for the next step of executing $j$: 
\begin{itemize}
	\item If $j$ is in $N$, then $\sigma_j = u_j$.
	\item If $j$ is in $T$ and has not been tested, then $\sigma_j = \beta t_j$.
	\item If $j$ is in $T$ and has already been tested, then $\sigma_j = p_j$.
\end{itemize}

Note that in the second case above, we 'stretch' the scaling time by multiplying with ${\beta \ge 1}$. The intention behind this stretching is that testing a job, unlike executing it, does not immediately lead to a job being completed. Therefore the parameter $\beta$ artificially lowers the relevance of testing in the ordering of our algorithm. Note that the actual time needed for testing remains $t_j$.

In the following, we show that the above algorithm achieves a provably good competitive ratio. The parameters are kept general in the proof and are then optimized in a final step. We present the computations with general parameters for a clearer picture of the proof structure, which we will reuse in later sections. In the final optimization step it will turn out that setting $\alpha = \beta = 1$ yields a best-possible competitive ratio of $4$.

\begin{theorem}
	\label{thm:1_1_SORT}
	The $(1,1)$-SORT algorithm is $4$-competitive for the objective of minimizing the sum of completion times.
\end{theorem}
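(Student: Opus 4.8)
The plan is to bound the cost $\sum_j C_j$ of the algorithm by comparing it, job by job, against the optimal offline cost $\OPT = \sum_j C_j^\ast$. First I would fix notation: for the algorithm with $\alpha=\beta=1$, every job $j$ receives a scaling time $\sigma_j$ equal to $t_j$ (if $j\in T$, before testing), then $p_j$ (after testing), or $u_j$ (if $j\in N$); and the jobs are processed in a single list sorted by current $\sigma_j$, so a tested job is "re-inserted" with key $p_j$ after its test completes. The key structural observation is that $C_j$ equals the sum of all processing/testing pieces that the algorithm runs before $j$ finishes. I would write $C_j \le \sum_{k}\,(\text{amount of work from job }k\text{ that precedes }j\text{'s completion})$, and then classify, for each ordered pair $(j,k)$, how much of $k$'s work can possibly sit before $C_j$: at most $t_k + p_k \le u_k$ if $k\in T$ was fully handled before $j$ finished, at most $u_k$ if $k\in N$, and — importantly — possibly just the test piece $t_k$ of a tested job whose execution lands after $j$. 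The goal is to show the total of these contributions, summed over all $j$, is at most $4\,\OPT$.

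The second ingredient is a lower bound on $\OPT$. Here I would use that the offline optimum tests $j$ iff $t_j+p_j\le u_j$ and then schedules in nondecreasing order of $d_j := \min(t_j+p_j,u_j)$, so $\OPT = \sum_j \big(\sum_{k:\,d_k \le d_j} d_k\big)$ where ties are broken consistently — in particular $\OPT \ge \sum_j d_j$ and also $\OPT \ge \frac12\big(\sum_j d_j\big)^2/(\cdots)$-type smoothing bounds are available; more useful is the pairwise form $\OPT \ge \sum_{j}\sum_{k \preceq j} d_k$ for the optimal order $\preceq$. I would then need two comparisons between the algorithm's per-job "real length" and $d_j$: for $j\in N$ we have $\sigma_j = u_j = d_j$ (since $u_j < \alpha t_j = t_j$ forces $d_j = u_j$ when $j\notin T$... actually $j\in N \iff u_j < t_j$, so $d_j = \min(t_j+p_j,u_j) = u_j$ since $u_j < t_j \le t_j+p_j$), and for $j\in T$ we have $u_j \ge t_j$ and the algorithm's real cost on $j$ is $t_j + p_j$; I'd bound $t_j + p_j \le 2\max(t_j,p_j)$ and relate $\max(t_j,p_j)$ to $d_j=\min(t_j+p_j,u_j)$, noting $t_j \le d_j$ always (as $t_j \le u_j$ and $t_j \le t_j+p_j$) so the test piece is never the bottleneck, and $p_j$ may exceed $d_j$ only when $u_j < t_j + p_j$, in which case the algorithm "wasted" the test but $d_j = u_j \ge t_j$ still controls things up to a constant.

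The heart of the argument — and the step I expect to be the main obstacle — is the cross-examination of the ordering: because tested jobs are inserted twice into the sorted list (once with key $t_j$, once with key $p_j$), the algorithm's processing order does not line up cleanly with the optimal order by $d_j$, and a job $j$ may be overtaken by the long execution $p_k$ of some $k$ with small test $t_k$ that the algorithm legitimately ran early. I would handle this by a charging scheme: when computing $C_j$, split the jobs $k$ whose work precedes $C_j$ into (i) those already "finished" in the algorithm before $j$ — these satisfy $\sigma_k \le$ (the relevant key of $j$ at that moment), which I'd chain back to $d_k \le c\cdot d_j$ for a constant $c$ — and (ii) those only partially done (at most the test piece $t_k \le d_k$, and $k$'s key at insertion was $\le$ $j$'s key, so again $d_k \lesssim d_j$). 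Summing, each $C_j$ is at most a constant times $\sum_{k:\,d_k \lesssim d_j} d_k$, and plugging into $\sum_j C_j$ and comparing termwise with $\OPT = \sum_j \sum_{k \preceq j} d_k$ yields the factor $4$. I would then verify that with $\alpha=\beta=1$ the constants collapse exactly to $4$ (the two factors of $2$ — one from $t_j+p_j \le 2\max(t_j,p_j)$, one from the "wasted test" case where $p_j$ can be as large as it likes while $u_j \approx t_j$ — multiply), and close by remarking this matches the general-parameter optimization promised in the text, deferring that optimization over $(\alpha,\beta)$ to the detailed proof.
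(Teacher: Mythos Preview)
Your high-level plan is right and in fact matches the paper: write $C_j=\sum_k c(k,j)$, where $c(k,j)$ is the time the algorithm spends on $k$ before $j$ completes, then compare against $\OPT=\sum_j j\cdot d_j$ with $d_j=\min(t_j+p_j,u_j)$ (the paper calls this $\rho_j$). You also correctly spot the structural difficulty, namely that tested jobs enter the sorted list twice.

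The gap is in your proposed charging scheme. You claim that whenever a piece of $k$ precedes $j$'s completion, you can ``chain back to $d_k \le c\cdot d_j$'', and hence that $C_j$ is at most a constant times $\sum_{k:\,d_k\le d_j} d_k$. This is false. Take one job $j\in N$ with $u_j=1$, $t_j=2$ (so $d_j=1$), together with $n-1$ jobs $k\in T$ having $t_k=1-\varepsilon$, $u_k=p_k=M$ for large $M$ (so $d_k=M$). With $\alpha=\beta=1$ all the $t_k$-tests have key $1-\varepsilon<u_j=1$, so every test runs before $j$, and $C_j\approx n$, while $\sum_{k:\,d_k\le d_j} d_k = d_j = 1$. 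No constant $c$ saves this. The point is that the test piece $t_k$ of a job with \emph{arbitrarily large} $d_k$ can legitimately land before $j$; there is simply no $d_k\lesssim d_j$ relation to exploit.

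The paper sidesteps this by bounding $c(k,j)$ not via $d_k$ but via $d_j$: a case analysis (their Contribution Lemma) shows that for $\alpha=\beta=1$ one always has $c(k,j)\le 2\,d_j$, uniformly in $k$. In the example above this gives $C_j\le 2n\cdot d_j=2n$, which is fine. One then combines this with the trivial bound $c(k,j)\le p_k^A\le 2\,d_k$: order the jobs so that $d_1\ge\cdots\ge d_n$, use $c(k,j)\le 2\,d_k$ for $k>j$ and $c(k,j)\le 2\,d_j$ for $k\le j$, and sum to obtain
\[
\sum_j C_j \;\le\; 2\sum_{j}(j-1)\,d_j \;+\; 2\sum_{j} j\,d_j \;\le\; 4\sum_j j\,d_j \;=\; 4\cdot\OPT.
\]
Your informal ``two factors of $2$ multiply'' story does not match this structure; the $4$ comes from \emph{adding} the two bounds $2$ and $2$, one from the Contribution Lemma and one from $p_k^A\le 2d_k$ (your observations $t_j\le d_j$ and $p_j\le d_j$ for $j\in T$ are exactly the facts needed for the first of these). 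Rewriting your argument so that the per-pair bound is in terms of $d_j$ rather than $d_k$ is the missing idea.
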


\begin{proof}
For the purpose of estimating the algorithmic result against the optimum, let $\rho_j := {\min(u_j,t_j + p_j)}$ be the optimal running time of job $j$. Without loss of generality, we order the jobs \st ${\rho_1 \ge \hdots \ge \rho_n}$. Hence the objective value of the optimum is
\begin{equation}
\label{eq:opt_value}
\OPT = \sum_{j=1}^n j \cdot \rho_j
\end{equation}
Additionally, let
\begin{equation}
\label{eq:alg_proc_time}
	\pjA := \begin{cases}
		t_j + p_j 	& \text{if } j \in T,\\
		u_j 		& \text{if } j \in N,
	\end{cases}
\end{equation}
be the \emph{algorithmic running time} of $j$, i.e.\ the time the algorithm spends on running job $j$.

We start our analysis by comparing $\pjA$ to the optimal runtime $\rho_j$ for a single job, summarized in the following Proposition:
\begin{proposition}
\label{prop:claims_case_dist}
	\begin{enumerate}[label=\emph{(\alph*)}]
		\item $\forall j \in T$: $t_j \le \rho_j$, $p_j \le \rho_j$
		\item $\forall j \in T$: $\pjA \le \left(1+\frac{1}{\alpha}\right) \rho_j$
		\item $\forall j \in N$: $\pjA \le \alpha \rho_j$
	\end{enumerate}
\end{proposition}

Part (a) directly estimates testing and running times of tested jobs against the values of the optimum. We will use this extensively when computing the completion time of the jobs.
The proof of parts (b) and (c) is very similar to the proof of Theorem 14 in \cite{Duerr2018} for uniform testing times. We refer to the appendix for a complete write-down of the proof. Note that instead of considering a single bound, we split the upper bound of the algorithmic running time $\pjA$ into different results for tested (b) and untested jobs (c). This allows us to differentiate between different cases in the proof of Lemma \ref{lem:contribution_lemma} in more detail. We will often make use of this Proposition to upper bound the algorithmic running time in later sections.

To obtain an estimate of the completion time $C_j$, we consider the \emph{contribution} $c(k,j)$ of all jobs $k \in [n]$ to $C_j$. We define $c(k,j)$ to be the amount of time the algorithm spends scheduling job $k$ before the completion of $j$. Obviously it holds that $c(k,j) \le \pkA$. The following central lemma computes an improved upper bound on the contribution $c(k,j)$, using a rigorous case distinction over all possible configurations of $k$ and $j$:

\begin{lemma}[Contribution Lemma]
\label{lem:contribution_lemma}
	Let $j \in [n]$ be a given job. The completion time of $j$ can be written as
	\begin{equation*}
	C_j = \sum_{k \in [n]} c(k,j).
	\end{equation*}
	Additionally, for the contribution of $k$ to $j$ it holds that
	\begin{equation*}
		c(k,j) \le \max\left(\left(1+\frac{1}{\beta}\right)\alpha,1+\frac{1}{\alpha},1+\beta\right) \rho_j.
	\end{equation*}
\end{lemma}

Refer to Appendix \ref{append:contribution_lemma} for the proof. Depending on whether $j$ and $k$ are tested or not, the lemma computes various upper bounds on the contribution using estimates from Proposition \ref{prop:claims_case_dist}. Finally, the given bound on $c(k,j)$ is achieved by taking the maximum over the different cases.

Recall that the jobs are ordered by non-increasing optimal execution times $\rho_j$, which by Proposition \ref{prop:claims_case_dist} are directly tied to the algorithmic running times. Hence, the jobs $k$ with small indices are the 'bad' jobs with possibly large running times. For jobs with $k \le j$ we therefore use the independent upper bound from the Contribution Lemma. Jobs with large indices $k>j$ are handled separately and we directly estimate them using their running time $\pkA$.

By Lemma \ref{lem:contribution_lemma} and Proposition \ref{prop:claims_case_dist}(b),(c) we have
\begin{equation*}
	\begin{aligned}
		C_j 	&= \sum_{k>j} c(k,j) + \sum_{k \le j} c(k,j)\\
				&\le \sum_{k>j} \pkA + \sum_{k \le j} \max\left(\left(1+\frac{1}{\beta}\right)\alpha,1+\frac{1}{\alpha},1+\beta\right) \rho_j\\
				&= \sum_{k>j} \max\left(\alpha,1+\frac{1}{\alpha}\right) \rho_k + \max\left(\left(1+\frac{1}{\beta}\right)\alpha,1+\frac{1}{\alpha},1+\beta\right) j \cdot \rho_j.\\
	\end{aligned}
\end{equation*}

Finally, we sum over all jobs $j$:
\begin{equation*}
	\begin{aligned}
		\sum_{j=1}^n C_j 	&= \sum_{j=1}^n \sum_{k=j+1}^n \max\left(\alpha,1+\frac{1}{\alpha}\right) \rho_k\\
							&\qquad + \sum_{j=1}^n \max\left(\left(1+\frac{1}{\beta}\right)\alpha,1+\frac{1}{\alpha},1+\beta\right) j \cdot \rho_j\\
							&= \max\left(\alpha,1+\frac{1}{\alpha}\right) \sum_{j=1}^n (j-1) \rho_j\\
							&\qquad + \max\left(\left(1+\frac{1}{\beta}\right)\alpha,1+\frac{1}{\alpha},1+\beta\right) \sum_{j=1}^n j \cdot \rho_j\\
							&\le \underbrace{ \left( \max\left(\alpha,1+\frac{1}{\alpha}\right) + \max\left(\left(1+\frac{1}{\beta}\right)\alpha,1+\frac{1}{\alpha},1+\beta\right) \right) }_{=: f(\alpha,\beta)} \sum_{j=1}^n j \cdot \rho_j\\
							&= f(\alpha,\beta) \cdot \OPT\\
	\end{aligned}
\end{equation*}
Minimizing $f(\alpha,\beta)$ on the domain $\alpha,\beta \ge 1$ yields optimal parameters $\alpha = \beta = 1$ and a value of $f(1,1) = 4$. We conclude that $(1,1)$-SORT is $4$-competitive.
\end{proof}

The parameter selection $\alpha = 1$, $\beta = 1$ is optimal for the closed upper bound formula we obtained in our proof. It is possible and somewhat likely that a different parameter choice leads to better overall results for the algorithm. In the optimal makespan algorithm (see Section \ref{sec:makespan}) the value of $\alpha$ is higher, suggesting that $\alpha = 1$, which leads to testing all non-trivial jobs, might not be the best choice. The problem structure and the approach by D\"urr \etal \cite{Duerr2018} also motivate setting $\beta$ to some higher value than $1$. For our proof, setting parameters like we did is optimal.

In the appendix, we take advantage of this somewhat unexpected parameter outcome to prove that $(1,1)$-SORT cannot be better than $3$-competitive. Additionally, we show that for \emph{any} choice of parameters, $(\alpha,\beta)$-SORT is not better than 2-competitive.

\subsection{A Deterministic Algorithm with Preemption}
\label{subsec:det_preemption}

The goal of this section is to show that if we allow jobs to be preempted there exists a $3.2361$-competitive algorithm. In his book on Scheduling, Pinedo \cite{Pinedo2016} defines preemption as follows: "The scheduler is allowed to interrupt the processing of a job (preempt) at any point in time and put a different job on the machine instead."

The idea for our algorithm in the preemptive setting is based on the so-called \emph{Round Robin} rule, which is used frequently in preemptive machine scheduling \cite[Chapters 3.7, 5.6, 12.4]{Pinedo2016}. The scheduling time frame is divided into very small equal-sized units. The Round Robin algorithm then cycles through all jobs, tending to each job for exactly one unit of time before switching to the next. It ensures that at any time the amount every job has been processed only differs by at most one time unit \cite{Pinedo2016}.

The Round Robin algorithm is typically applied when job processing times are completely unknown. In our setting, we are actually given some upper bounds for our processing times and may invest testing time to find out the actual values. Despite having more information, it turns out that treating all job processing times as unknown in a Round Robin setting gives a provably good result. The only way we employ upper bounds and testing times is again to decide which jobs will be tested and which will not. We again do this at the beginning of our schedule for all given jobs. The rule to decide testing is exactly the same as in the first phase of Algorithm \ref{alg:alpha_beta_sort}: If $u_j/t_j \ge \alpha$, then test $j$, otherwise run $j$ untested. Again, $\alpha$ is a parameter that is to be determined. It will turn out that setting $\alpha = \varphi$ gives the best result.

The pseudo-code for the \emph{Golden Round Robin} algorithm is given in Algorithm~\ref{alg:golden-round-robin}.

\begin{algorithm}[ht]
	$T \leftarrow \emptyset$, $N \leftarrow \emptyset$, $\sigma_j \equiv 0$\;
	\ForEach{$j\in[m]$}{
		\eIf{$u_j \ge \varphi t_j$}
		{add $j$ to $T$\; set $\sigma_j \leftarrow t_j$\;}
		{add $j$ to $N$\; set $\sigma_j \leftarrow u_j$\;}
	}
	\While{$\exists j\in[m]$ not completely scheduled}{
		run Round Robin on all jobs using $\sigma_j$ as their processing time\;
		let $j_{\min}$ be the first job to finish during the current execution\;
		\If{$j_{\min} \in T$ and $j_{\min}$ tested but not executed}{set $\sigma_{j_{\min}} \leftarrow p_{j_{\min}}$ and keep $j_{\min}$ in the Round Robin rotation\;}
	}
	\caption{Golden Round Robin}
	\label{alg:golden-round-robin}
\end{algorithm}

Essentially, the algorithm first decides for all jobs whether to test them and then runs a regular Round Robin scheme on the algorithmic testing time $\pjA$, which is defined as in \eqref{eq:alg_proc_time}. 

\begin{theorem}
\label{thm:det_preempt}
	The Golden Round Robin algorithm is $3.2361$-competitive in the preemptive setting for the objective of minimizing the sum of completion times. This analysis is tight.
\end{theorem}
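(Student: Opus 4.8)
The plan is to mimic the structure of the proof of Theorem~\ref{thm:1_1_SORT}, but adapted to the Round Robin execution pattern. As before, order the jobs so that $\rho_1 \ge \dots \ge \rho_n$, where $\rho_j = \min(u_j, t_j+p_j)$, so that $\OPT = \sum_j j\cdot\rho_j$. The key structural fact about Round Robin is that when it is run on ``processing requirements'' $q_k$ (here $q_k$ will be the algorithmic running time $\pkA$ from~\eqref{eq:alg_proc_time}, possibly revealed in two stages for tested jobs), the completion time of a job $j$ equals $\sum_{k} \min(q_k, q_j)$ up to the granularity of the time unit, which we let tend to zero. So the first step is to rewrite $C_j = \sum_{k\in[n]} c(k,j)$ where now $c(k,j) = \min(\pkA,\pjA)$ in the idealized limit — except that the two-phase revelation of $p_{j_{\min}}$ for tested jobs needs care: a tested job $k$ contributes its test time $t_k$ interleaved with everything, and then its execution time $p_k$ interleaved with whatever is still running; one must check this still yields $c(k,j)\le \min(\pkA,\pjA)$ plus lower-order terms, which it does because once $j$ finishes, nothing more of $k$ is charged.

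The second step is a Contribution-Lemma analogue: bound $c(k,j)\le \gamma\,\rho_j$ for $k\le j$ with $\gamma$ depending on $\alpha$. Since $c(k,j)\le\pjA$ always, and $\pjA \le \max(\alpha,1+\tfrac1\alpha)\,\rho_j$ by Proposition~\ref{prop:claims_case_dist}(b),(c), we get $c(k,j)\le \max(\alpha,1+\tfrac1\alpha)\,\rho_j$ immediately — this is cleaner than in the non-preemptive case because Round Robin's ``fair'' nature means a job never contributes more than its own length to another job's completion time, so no $\beta$-stretching or cross-case blow-up is needed. For $k>j$ we again charge $c(k,j)\le\min(\pkA,\pjA)\le\pkA\le\max(\alpha,1+\tfrac1\alpha)\,\rho_k$. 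Then
\begin{equation*}
C_j \le \sum_{k>j}\max\!\left(\alpha,1+\tfrac1\alpha\right)\rho_k + \max\!\left(\alpha,1+\tfrac1\alpha\right) j\cdot\rho_j,
\end{equation*}
and summing over $j$, exactly as in Theorem~\ref{thm:1_1_SORT}, collapses $\sum_j\sum_{k>j}\rho_k = \sum_j (j-1)\rho_j$ into
\begin{equation*}
\sum_{j=1}^n C_j \le \max\!\left(\alpha,1+\tfrac1\alpha\right)\!\left(\sum_j (j-1)\rho_j + \sum_j j\cdot\rho_j\right) \le 2\max\!\left(\alpha,1+\tfrac1\alpha\right)\OPT.
\end{equation*}
The function $\alpha\mapsto\max(\alpha,1+\tfrac1\alpha)$ on $\alpha\ge1$ is minimized where $\alpha = 1+\tfrac1\alpha$, i.e.\ at $\alpha=\varphi$, giving value $\varphi$ and hence the ratio $2\varphi\approx 3.2361$. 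This is why the algorithm is named Golden Round Robin and why $\alpha=\varphi$ is the stated optimal choice.

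The remaining obligation is tightness: construct instances forcing the ratio arbitrarily close to $2\varphi$. Here I expect the main obstacle. One needs two phenomena simultaneously at their worst: (i) the $\max(\alpha,1+\tfrac1\alpha)$ factor tight, which at $\alpha=\varphi$ requires a mix of jobs where untested jobs realize $\pjA=\alpha\rho_j$ (upper bound just below $\alpha t_j$, so $u_j=\alpha t_j-\epsilon$ and $p_j$ tiny, giving $\rho_j\approx t_j$, $\pjA\approx\alpha t_j$) and tested jobs realize $\pjA=(1+\tfrac1\alpha)\rho_j$ (adversary sets $p_j=u_j=\alpha t_j$ after the test, so $\rho_j=\alpha t_j$ but the algorithm pays $t_j+p_j=(1+\alpha)t_j = (1+\tfrac1\alpha)\rho_j$); and (ii) the factor-$2$ from Round Robin tight, which needs many jobs of essentially equal optimal length so that every pair mutually delays the other — a ``flat'' instance with $n$ identical $\rho_j$ gives $\sum C_j \approx n\cdot C \cdot$(something close to $n$) versus $\OPT\approx \tfrac{n^2}{2}C$, the ratio $\to 2$. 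Combining a large block of identical jobs of one type, then having the adversary realize processing times adversarially, should drive the overall ratio to $2\varphi$; the bookkeeping of ceilings/time-unit granularity and the two-stage revelation in the lower-bound instance is the fiddly part, but it is routine once the instance is pinned down. I would present the upper bound in full and sketch the matching instance.
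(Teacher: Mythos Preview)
Your upper-bound argument is correct and matches the paper's almost exactly: the paper also writes $C_j$ via contributions, observing that in the processor-sharing limit $c(k,j)=\min(\pkA,\pjA)$ (it phrases this via the sets $J_j^1=\{k:\pkA\ge\pjA\}$ and $J_j^2=\{k:\pkA<\pjA\}$), then bounds $c(k,j)\le\pkA$ for $k>j$ and $c(k,j)\le\pjA$ for $k\le j$, sums, and optimizes $\alpha$ to $\varphi$. Your remark about the two-phase revelation is harmless overcaution: since a tested job stays continuously in the rotation with total work $t_j+p_j$, the processor-sharing completion-time formula $C_j=\sum_k\min(\pkA,\pjA)$ holds without correction terms.

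For tightness you are overcomplicating things. You do not need a mix of tested and untested jobs to hit both sides of the $\max$; a single block of identical \emph{tested} jobs already achieves $\pjA=\varphi\,\rho_j$ and the Round-Robin factor $2$ simultaneously. The paper's instance is simply $n$ jobs with $u_j=p_j=1$ and $t_j=1/\varphi$: every job is tested (since $u_j/t_j=\varphi$), the adversary sets $p_j=u_j$, so $\pjA=1+1/\varphi=\varphi$ while $\rho_j=1$; all jobs finish together at time $n\varphi$, giving $\ALG=n^2\varphi$ against $\OPT=n(n+1)/2$, and the ratio tends to $2\varphi$.
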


We only provide a sketch of the proof here, the full proof can be found in Appendix \ref{append:det_preemption}.

\begin{proof}[Proof sketch]
	We set $\alpha = \varphi$ and use Proposition \ref{prop:claims_case_dist}(b),(c) to bound the algorithmic running time $\pjA$ of a job $j$ by its optimal running time $\rho_j$.
	\begin{equation*}
		\pjA \le \varphi \rho_j.
	\end{equation*}
	
	We then compute the contribution of a job $k$ to a fixed job $j$ by grouping jobs based on their finishing order in the schedule. This allows us to estimate the completion time of job $j$:
	\begin{equation*}
		C_j \le \sum_{k > j} \pkA + j \cdot \pjA
	\end{equation*}
	Finally, we sum over all jobs to receive $\ALG \le 2\varphi \cdot \OPT$.
	
	To show that the analysis is tight, we provide an example where the algorithmic solution has a value of $2\varphi \cdot \OPT$ if we let the number of jobs approach infinity.
\end{proof}

The following theorem additionally provides a lower bound for the deterministic preemptive setting, giving us a first simple lower bound for this case. The proof is based on the lower bound provided in \cite{Duerr2018} for the deterministic non-preemptive case. We refer to Appendix \ref{append:lower_bound_preempt} for this proof.

\begin{theorem}
\label{thm:lower_bound_preempt}
	No algorithm in the preemptive deterministic setting can be better than $1.8546$-competitive.
\end{theorem}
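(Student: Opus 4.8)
The plan is to adapt the adversarial lower-bound construction from D\"urr et al.\ \cite{Duerr2018} for the deterministic non-preemptive case to the preemptive setting. Recall that the core idea in \cite{Duerr2018} is to use instances with uniform testing times $t_j = 1$ and carefully chosen upper bounds $u_j$ so that, whatever the algorithm does, the adversary can reveal processing times that punish it. The crucial observation that makes such a construction carry over to the preemptive case is that preemption does \emph{not} help the algorithm when all jobs are \emph{identical} in their known data: if every job has the same testing time $t$ and the same upper bound $u$, then by symmetry any preemptive schedule can be assumed (w.l.o.g., up to a relabeling and without increasing the objective) to process the jobs in a round-robin-like fashion, and the adversary's power to assign processing times adaptively after each test is unaffected by whether the machine is allowed to switch between jobs.

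The key steps, in order, would be the following. First, I would fix the instance family used in \cite{Duerr2018} to prove the $1.8546$ bound: a set of $n$ jobs with $t_j = 1$ and a common upper bound $u$ (the value $1.8546\ldots$ is the root of the relevant polynomial arising from balancing the adversary's two threats — testing a job and finding it long, versus running it untested). Second, I would argue the \emph{reduction to a canonical schedule}: given any deterministic preemptive algorithm $\mathcal{A}$, consider the first moment it commits an irrevocable decision about some job (either completing an untested job, or starting/finishing a test). Because all jobs look identical, I may assume $\mathcal{A}$ treats them symmetrically until tests reveal information; this lets me track a single ``state'' — how many jobs have been tested, with which revealed values — exactly as in the non-preemptive argument. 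Third, I would replay the adversary's strategy from \cite{Duerr2018}: when the algorithm tests a job, the adversary reveals a processing time (either $0$ or a value making $t+p$ just exceed $u$, depending on the construction) chosen to make that test ``wasted''; when the algorithm runs a job untested, the adversary reveals (for the offline optimum) that this job was actually short. Fourth, I would compute $\ALG$ and $\OPT$ for the resulting realized instance and verify the ratio tends to $1.8546$ as $n \to \infty$, which is a routine recomputation of the same sum-of-completion-times expressions appearing in \cite{Duerr2018}, now accounting for the fact that preemptive processing of the long/untested portions only delays completions further (never helps).

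The main obstacle I anticipate is making the ``preemption does not help'' reduction fully rigorous. In the non-preemptive setting the algorithm's behavior is naturally discretized into a sequence of atomic actions (test job $j$, then run it; or run job $j$ untested), and the adversary reacts between actions. Under preemption the algorithm may interleave fractional progress on many jobs and many tests simultaneously, so I need a clean argument — likely an exchange/averaging argument exploiting the symmetry of identical jobs, together with convexity of the completion-time objective — showing that any such fractional strategy is dominated by one of the canonical discrete strategies against which the \cite{Duerr2018} adversary is already known to force ratio $1.8546$. Once that domination is established, the rest of the proof is essentially a citation of the existing calculation. This is why the statement is phrased as giving only ``a first simple lower bound'': it transfers the non-preemptive bound rather than exploiting any preemption-specific adversarial leverage, and closing the gap to $2\varphi$ is left open.
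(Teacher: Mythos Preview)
Your high-level plan --- reuse the uniform-testing instance family from \cite{Duerr2018} and reduce the preemptive case to the non-preemptive one --- matches the paper. The difference, and the gap in your proposal, lies in how you carry out that reduction.

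You propose to argue via symmetry that a preemptive algorithm on identical jobs can be assumed to behave in a ``round-robin-like fashion'', and then to invoke an exchange/convexity argument. This is both more complicated than necessary and not obviously correct: the symmetry is broken the moment the first test reveals a processing time, and nothing forces the algorithm toward round-robin --- a preemptive algorithm could just as well test one job fully, observe $p_j$, and adapt. So the ``canonical schedule'' you want to reduce to is not round-robin, and the convexity/averaging route you sketch does not clearly close the gap you yourself flag as the main obstacle.

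The paper's reduction is considerably simpler and sidesteps symmetry entirely. First extend the adversary to the preemptive setting by declaring that the moment the algorithm starts \emph{any} work on a job (a fraction of a test, or a fraction of an untested run), that job's decision is fixed and the adversary commits $p_j$ accordingly. Now take any preemptive schedule $\ALG^{pre}$, list its \emph{execution instances} --- the time points at which an untested run completes, a test completes, or a tested job's execution completes --- in order, and build a non-preemptive schedule $\ALG^*$ that performs exactly those operations atomically in that same order. The order of first-touches, and hence the adversary's responses and the realized instance, are unchanged; and since $\ALG^*$ merely removes interleaving, every completion time satisfies $C_j^* \le C_j^{pre}$. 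Thus $\ALG^*/\OPT \le \ALG^{pre}/\OPT$, and the non-preemptive lower bound of $1.8546$ from \cite{Duerr2018} transfers directly. No symmetry, no round-robin, no convexity needed.
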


\section{Randomized Setting}
\label{sec:randomized}

In this section we introduce randomness to further improve the competitive ratio of Algorithm \ref{alg:alpha_beta_sort}. There are two natural places to randomize: when deciding which jobs to test and the decision about the ordering of the jobs. These decisions directly correspond to the parameters $\alpha$ and $\beta$.

Making $\alpha$ randomized, for instance, could be achieved by defining $\alpha$ as a random variable with density function $f_\alpha: [1,\infty] \rightarrow \R_0^+$ and testing $j$ if and only if $r_j := u_j/t_j \ge \alpha$. Then the probability for testing $j$ would be given by $p = \int_1^{r_j} f_\alpha(x) dx$. Using a random variable $\alpha$ like this would make the analysis unnecessarily complicated, therefore we directly consider the probability $p$ without defining a density, and let $p$ depend on $r_j$. This additionally allows us to compute the probability of testing \emph{independently} for each job.

Introducing randomness for $\beta$ is even harder. The choice of $\beta$ influences multiple jobs at the same time, therefore independence is hard to establish. Additionally, $\beta$ appears in the denominator of our analysis frequently, hindering computations using expected values. We therefore forgo using randomness for the $\beta$-parameter and focus on $\alpha$ in this paper. We encourage future research to try their hand at making $\beta$ random.

We give a short pseudo-code of our randomized algorithm in Algorithm \ref{alg:rand_sort}. It is given a parameter-function $p(r_j)$ and a parameter $\beta$, both of which are to be determined later.

\begin{algorithm}[ht]
	$T \leftarrow \emptyset$, $N \leftarrow \emptyset$, $\sigma_j \equiv 0$\;
	\ForEach{$j\in[m]$}{
		add $j$ to $T$ with probability $p(r_j)$ and set $\sigma_j \leftarrow \beta t_j$\;
		otherwise add it to $N$ and set $\sigma_j \leftarrow u_j$\;
	}
	\While{$N \cup T \neq \emptyset$}{
		choose $j_{\min} \in \argmin_{j \in N \cup T} \sigma_j$\;
		\uIf{$j_{\min} \in N$}{execute $j_{\min}$ untested\; remove $j_{\min}$ from $N$\;}
		\uElseIf{$j_{\min} \in T$}{
			\eIf{$j_{\min}$ not tested}
			{test $j_{\min}$\; set $\sigma_{j_{\min}} \leftarrow p_{j_{\min}}$\;}
			{execute $j_{\min}$\; remove $j_{\min}$ from $T$\;}
		}
	}
	\caption{Randomized-SORT}
	\label{alg:rand_sort}
\end{algorithm}

\begin{theorem}
\label{thm:rand_sort}
	Randomized-SORT is $3.3794$-competitive for the objective of minimizing the sum of completion times.
\end{theorem}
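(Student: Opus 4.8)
The plan is to mirror the structure of the proof of Theorem~\ref{thm:1_1_SORT}, but with all quantities that depend on the (now random) testing decision replaced by their expectations. Fix a job $j$ and condition on whether $j$ is tested. The key observation is that, as in the deterministic case, we can write $C_j = \sum_{k \in [n]} c(k,j)$ and split the sum into $k \le j$ and $k > j$ according to the fixed ordering $\rho_1 \ge \cdots \ge \rho_n$. For the tail $k > j$ we use $c(k,j) \le \pkA$ directly, and for $k \le j$ we use a per-pair bound analogous to the Contribution Lemma, except that now the bound depends on the (random) test status of both $j$ and $k$. Since the test decisions are made independently across jobs with probability $p(r_k)$, the expectation $\E[C_j]$ factorizes nicely: $\E[C_j] \le \sum_{k>j} \E[\pkA] + \sum_{k\le j} \E[c(k,j)]$, and each term is a function of $p(r_j), p(r_k), \beta$ and the optimal running times.

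The next step is to carry out the conditional case analysis for a single pair $(k,j)$ with $k \le j$. There are four cases depending on whether each of $j,k$ lands in $T$ or $N$; the probability of each case is a product of two of the quantities $p(\cdot)$ and $1-p(\cdot)$. For each case, Proposition~\ref{prop:claims_case_dist}(a)--(c) gives the needed bounds: for instance if $k \in N$ then $c(k,j) \le \pkA \le \alpha_k \rho_k$ where $\alpha_k$ is whatever threshold $r_k$ actually realizes (here one has to be a little careful, since with a probabilistic rule there is no single global $\alpha$; the clean way is to bound $\pkA/\rho_k$ and $\pkA/\rho_j$ directly in terms of $r_k$ and the fact that $k\le j$, exactly as in the appendix proof of Proposition~\ref{prop:claims_case_dist}). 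Collecting the four weighted cases, one obtains $\E[c(k,j)] \le g(p(r_j),p(r_k),\beta)\,\rho_j$ for an explicit function $g$, and similarly $\E[\pkA] \le h(p(r_k))\,\rho_k$. Summing over $j$ and reindexing the double sum as in the deterministic proof (using $\sum_j \sum_{k>j}\rho_k = \sum_k (k-1)\rho_k$) yields $\E[\ALG] \le F(p(\cdot),\beta)\cdot\OPT$ for a bound $F$ that is a supremum over the possible values of $r_j$ of some expression in $p(r_j)$.

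The final step is the optimization: choose the function $p(r)$ and the constant $\beta$ to minimize the worst-case coefficient. I expect $p(r)$ to be non-trivial only on some interval $[1, r^\ast]$, equal to $0$ for small $r$ and $1$ for large $r$, with a smooth interpolation in between chosen so that the competing cases in the maximum are balanced; $\beta$ is then tuned jointly. This is where the numerical constant $3.3794$ comes from, and it is essentially a calculus exercise once $F$ is written down explicitly.

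The main obstacle, and the part that needs the most care, is precisely the bookkeeping in the conditional pair analysis: because the test decision is randomized and independent per job, one must track the joint distribution of the statuses of $j$ and $k$ and make sure the bounds from Proposition~\ref{prop:claims_case_dist} are applied with the correct threshold in each of the four cases — in particular that the estimates $\pjA \le (1+1/\alpha)\rho_j$ for tested jobs and $\pjA \le \alpha\rho_j$ for untested jobs are replaced by their $r$-dependent analogues so that the resulting $F$ is a genuine function of $r_j$ that can be optimized pointwise. Once that expression is correctly assembled, deriving the optimal $p(\cdot)$ and $\beta$ and evaluating the constant is routine, so I would relegate the full computation to an appendix and only state the resulting $F$ and its optimizer in the main text.
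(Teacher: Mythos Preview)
Your high-level plan---split $C_j$ into $k>j$ and $k\le j$, bound the tail by $\E[\pkA]$, bound the head by an expected contribution, then sum and optimize---is exactly what the paper does. The difference lies in how the head term $\E[c(k,j)]$ is controlled, and here your four-case joint conditioning on the test status of \emph{both} $j$ and $k$ is an unnecessary detour.

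The paper's key simplification is this: in the case analysis behind the Contribution Lemma, every upper bound on $c(k,j)$ is already expressed purely in terms of $j$'s data ($u_j$, $t_j$, $p_j$), \emph{before} the final estimation step that introduces $\alpha$. Consequently, once you fix whether $j$ is tested or not, the bound $c(k,j)\le(1+1/\beta)u_j$ (for $j\in N$) or $c(k,j)\le\max\bigl((1+\beta)t_j,(1+1/\beta)p_j,t_j+p_j\bigr)$ (for $j\in T$) holds \emph{deterministically}, regardless of what the coin flip did to $k$. So one only conditions on $j$, and the expectation over $j$'s status alone yields a bound $\lambda_j(\beta,p(r_j))$ that depends on $r_j$ but not on any $r_k$. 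This makes the final optimization a one-parameter problem in $r_j$.

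Your route instead produces a bound $g(p(r_j),p(r_k),\beta)\,\rho_j$ that genuinely involves $r_k$, and your write-up does not say how that dependence is removed before summing over $k\le j$. You would have to take a supremum over $p(r_k)\in[0,1]$, at which point---since each four-case bound is still in $j$'s parameters---you recover exactly the paper's two-case bound and have gained nothing. Relatedly, the remark about ``$\alpha_k$ is whatever threshold $r_k$ actually realizes'' and bounding $\pkA/\rho_j$ in terms of $r_k$ is off target: there is no threshold in the randomized rule, and the contribution bound for $k\le j$ comes from the ordering relations in the algorithm (e.g.\ $u_k\le u_j$, $\beta t_k\le u_j$), not from a ratio estimate on $\pkA$. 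Drop the conditioning on $k$ and use the raw (pre-$\alpha$) bounds from the Contribution Lemma; the rest of your outline then matches the paper, including the final optimization over $\beta$ and $p(r)$ that yields the constant $3.3794$.
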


\begin{proof}
	Again, we let $\rho_1 \ge \hdots \ge \rho_n$ denote the ordered optimal running time of jobs $1,\dots,n$. The optimal objective value is given by \eqref{eq:opt_value}. Fix jobs $j$ and $k$. For easier readability, we write $p$ instead of $p(r_j)$. Since the testing decision is now done randomly, the algorithmic running time $\pjA$ as well as the contribution $c(k,j)$ are now random variables. It holds 
	\begin{equation*}
		\pjA = 	\begin{cases}
					t_j + p_j 	&\text{with probability } p\\
					u_j 		&\text{with probability } 1-p\\
				\end{cases}
	\end{equation*}
	
	For the values of $c(k,j)$ we consult the case distinctions from the proof of the Contribution Lemma \ref{lem:contribution_lemma}. If $j \in N$, one can easily determine that $c(k,j) \le (1+1/\beta)u_j$ for all cases. Note that for this we did not need to use the final estimates with parameter $\alpha$ from the case distinction. Therefore this upper bound holds deterministically as long as we assume $j \in N$. By extension it also trivially holds for the expectation of $c(k,j)$:
	\begin{equation*}
		E[c(k,j)\,|\, j \text{ untested}]  \le (1+1/\beta)u_j.
	\end{equation*}
	Doing the same for the case distinction of $j \in T$, we get 
	\begin{equation*}
		E[c(k,j)\,|\, j \text{ tested}] \le \max\left((1+\beta)t_j,\left(1+\frac{1}{\beta}\right)p_j,t_j+p_j\right).
	\end{equation*}

	For the expected value of the contribution we have by the law of total expectation:
	\begin{equation*}
		\begin{aligned}
			E[c(k,j)] 	&= E[c(k,j)\,|\, j \text{ untested}] \cdot Pr[j \text{ untested}] \\
						&\qquad + E[c(k,j)\,|\, j \text{ tested}] \cdot Pr[j \text{ tested}]\\
						&\le \left(1+\frac{1}{\beta}\right) u_j \cdot (1-p) + \max\left((1+\beta)t_j,\left(1+\frac{1}{\beta}\right)p_j,t_j+p_j\right) \cdot p\\
		\end{aligned}
	\end{equation*}
	Note that this estimation of the expected value is independent of any parameters of $k$. That means, for fixed $j$ we estimate the contribution to be the same for all jobs with small parameter $k \le j$. Of course, as before, for the jobs with large parameter $k > j$ we may also alternatively directly use the algorithmic runtime of $k$:
	\begin{equation*}
		E[c(k,j)] \le E[\pkA].
	\end{equation*}
	
	Putting the above arguments together, we use the Contribution Lemma and linearity of expectation to estimate the completion time of $j$:
	\begin{equation*}
		\begin{aligned}
			E[C_j] 	&= \sum_{j=1}^n E[c(k,j)]\\
					&\le \sum_{k>j} E[\pkA] + \sum_{k \le j} E[c(k,j)].\\
		\end{aligned}
	\end{equation*}
	For the total objective value of the algorithm we receive again using linearity of expectation:
	\begin{equation*}
		\begin{aligned}
			E\left[\sum_{j=1}^n C_j\right] 	&\le \sum_{j=1}^n (j-1) E[\pjA] + \sum_{j=1}^n j \cdot E[c(k,j)]\\
											&\le \sum_{j=1}^n (j-1) (u_j \cdot (1-p) + (t_j+p_j) \cdot p)\\
											&\qquad + \sum_{j=1}^n j \Bigg( \left(1+\frac{1}{\beta}\right) u_j \cdot (1-p)\\
											&\qquad + \max\left((1+\beta)t_j,\left(1+\frac{1}{\beta}\right)p_j,t_j+p_j\right) \cdot p \Bigg)\\
											&\le \sum_{j=1}^n j \cdot \lambda_j(\beta,p),\\
		\end{aligned}
	\end{equation*}
	where we define
	\begin{equation*}
		\begin{aligned}
			\lambda_j(\beta,p) 	&:= \left(u_j+\left(1+\frac{1}{\beta}\right)u_j\right) \cdot (1-p)\\
								&\qquad + \left(t_j+p_j + \max\left((1+\beta)t_j,\left(1+\frac{1}{\beta}\right)p_j,t_j+p_j\right)\right) \cdot p.\\
		\end{aligned}
	\end{equation*}
	
	Having computed this first estimation for the objective of the algorithm, we now consider the ratio $\lambda_j(\beta,p) / \rho_j$ as a standalone. If we can prove an upper bound for this ratio, the same holds as competitive ratio for our algorithm.
	
	Hence the goal is to choose parameters $\beta$ and $p$, where $p$ can depend on $j$, \st $\lambda_j(\beta,p) / \rho_j$ is as small as possible. In the best case, we want to compute
	\begin{equation*}
		\min_{\beta \ge 1, p \in [0,1]} \ \max_j \ \frac{\lambda_j (\beta, p)}{\rho_j}.
	\end{equation*}
	
	\begin{lemma}
	\label{lem:min_max_rand}
		There exist parameters $\hat{\beta} \ge 1$ and $\hat{p} \in [0,1]$ s.t.
		\begin{equation*}
			\max_j \frac{\lambda_j (\hat{\beta}, \hat{p})}{\rho_j} \le 3.3794.
		\end{equation*}
	\end{lemma}

	The choice of parameters is given in the proof of the lemma, which can be found in Appendix \ref{append:min_max_rand}. During the proof we use computer-aided computations with Mathematica. The Mathematica code can be found in Appendix \ref{append:mathematica_code} and additionally on the webpage \cite{Albers2020} for download.
	
	To conclude the proof of the theorem, we write
	\begin{equation*}
		E\left[\sum_{j=1}^n C_j\right] \le \sum_{j=1}^n j \cdot \lambda_j (\hat{\beta}, \hat{p}) \le 3.3794 \sum_{j=1}^n j \cdot \rho_j =  3.3794 \cdot \OPT.
	\end{equation*}
\end{proof}

\section{Optimal Results for Minimizing the Makespan}
\label{sec:makespan}

In this section, we consider the objective of minimizing the makespan of our schedule. It turns out that we are able to prove the same tight algorithmic bounds for this objective function as D\"urr \etal in the unit-time testing case, both for deterministic and randomized algorithms. The decisions of the algorithms only depend on the ratio $r_j = u_j/t_j$. Refer to the appendix for the proofs.

\begin{theorem}
\label{thm:det_makespan}
	The algorithm that tests job $j$ iff $r_j \ge \varphi$ is $\varphi$-competitive for the objective of minimizing the makespan. No deterministic algorithm can achieve a smaller competitive ratio.
\end{theorem}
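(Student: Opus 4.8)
The plan is to use that, for the makespan objective, the order in which the machine processes tests and executions is irrelevant: the makespan of any schedule without idle time equals the total amount of work placed on the machine. Thus $\ALG = \sum_{j} \pjA$ and $\OPT = \sum_{j} \rho_j$ with $\rho_j = \min(u_j, t_j + p_j)$, exactly as in the proof of Theorem~\ref{thm:1_1_SORT}, and it suffices to establish the per-job inequality $\pjA \le \varphi\,\rho_j$ for every job $j$; summing over $j$ then yields $\ALG \le \varphi\,\OPT$.

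For the per-job bound I would invoke Proposition~\ref{prop:claims_case_dist} with the parameter choice $\alpha = \varphi$. Part~(b) gives $\pjA \le (1 + \tfrac{1}{\varphi})\,\rho_j$ for every tested job, part~(c) gives $\pjA \le \varphi\,\rho_j$ for every untested job, and the golden-ratio identity $\varphi^2 = \varphi + 1$, i.e.\ $1 + \tfrac{1}{\varphi} = \varphi$, makes the two bounds coincide with $\varphi\,\rho_j$. This simultaneously explains why $\varphi$ is the correct testing threshold for an algorithm of this shape: the untested bound $\alpha\,\rho_j$ increases in $\alpha$ while the tested bound $(1 + \tfrac{1}{\alpha})\,\rho_j$ decreases in $\alpha$, so the best worst-case constant is attained where they cross, namely at $\alpha = 1 + \tfrac{1}{\alpha}$, that is $\alpha = \varphi$. (A self-contained version would instead check the four subcases — tested or untested, crossed with $\rho_j = u_j$ or $\rho_j = t_j + p_j$ — each of which collapses using $p_j \le u_j$, $t_j = u_j/r_j$, the threshold condition on $r_j$, and $1 + \tfrac{1}{\varphi} = \varphi$.)

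For the matching lower bound I would exhibit a single-job instance with $t_1 = 1$ and $u_1 = \varphi$ and argue against the adaptive adversary. A deterministic algorithm must decide whether to test job $1$ before $p_1$ is revealed. If it tests, the adversary sets $p_1 = \varphi$, so $\ALG = 1 + \varphi = \varphi^2$ while $\OPT = \min(\varphi,\varphi^2) = \varphi$, a ratio of $\varphi$. If it does not test, then $\ALG = u_1 = \varphi$ regardless of the outcome, and the adversary sets $p_1 = 0$, giving $\OPT = \min(\varphi,1) = 1$ and again a ratio of $\varphi$. Since this instance uses a unit testing time, it recovers the lower bound of D\"urr \etal \cite{Duerr2018} for the uniform case; and because the algorithm tests exactly when $r_j \ge \varphi$ — hence tests here and pays $\varphi^2$ against $\OPT = \varphi$ — it also shows that the $\varphi$-competitiveness just proven is tight.

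I do not expect a genuine obstacle here: the content is a short case check plus a one-job adversary argument. The points that need a little care are (i) the reduction to the per-job inequality, which relies only on the order-independence of the makespan, and (ii) the adversary argument, where one should note that once the algorithm commits to not testing its cost is frozen at $u_1$, so the adversary may drive $\OPT$ down to $1$ by choosing $p_1 = 0$, and that $p_1 = u_1 = \varphi$ is a legal realization in the testing branch.
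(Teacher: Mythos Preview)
Your proof is correct and follows essentially the same route as the paper: for the upper bound, a per-job inequality $\pjA \le \varphi\,\rho_j$ (the paper obtains it by first reducing to a single job and then running the four-case check you sketch, whereas you invoke Proposition~\ref{prop:claims_case_dist} with $\alpha=\varphi$ and sum), and for the lower bound, the single-job instance $t_1=1$, $u_1=\varphi$ with the adversary choosing $p_1=u_1$ if the algorithm tests and $p_1=0$ if it does not. Your adversary assignments are the right way around; otherwise the two arguments coincide.
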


\begin{theorem}
\label{thm:rand_makespan}
	The randomized algorithm that tests job $j$ with probability $p = 1-1/(r_j^2-r_j+1)$ is $4/3$-competitive for the objective of minimizing the makespan. No randomized algorithm can achieve a smaller competitive ratio.
\end{theorem}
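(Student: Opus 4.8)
The plan is to exploit that for the makespan on a single machine without release dates every schedule is work-conserving, so $\ALG = \sum_{j=1}^n \pjA$ with $\pjA$ the algorithmic running time from \eqref{eq:alg_proc_time}, while an optimal schedule tests $j$ iff $t_j+p_j \le u_j$, giving $\OPT = \sum_{j=1}^n \rho_j$ with $\rho_j := \min(u_j,t_j+p_j)$. Hence, by linearity of expectation, it suffices to prove the single-job bound $E[\pjA] \le \tfrac43\rho_j$ and sum over $j$. I would fix a job $j$ with $r_j = u_j/t_j \ge 1$ — a job with $r_j < 1$ is never tested and trivially has $\pjA = u_j = \rho_j$, and the degenerate case $t_j = 0$ gives $\pjA = p_j = \rho_j$ since $p_j \le u_j$ — and abbreviate $p := p(r_j) = 1 - \tfrac{1}{r_j^2-r_j+1} = \tfrac{r_j(r_j-1)}{r_j^2-r_j+1} \in [0,1)$, so that $E[\pjA] = p\,(t_j+p_j) + (1-p)\,u_j$.

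\textbf{The two cases.} Next I would split on the sign of $u_j - (t_j+p_j)$. If $t_j+p_j \le u_j$ then $\rho_j = t_j+p_j$, and $E[\pjA] - \tfrac43\rho_j = (p-\tfrac43)(t_j+p_j) + (1-p)u_j$ is decreasing in $p_j$ (since $p < \tfrac43$), so it suffices to check $p_j = 0$. If $t_j+p_j > u_j$ then $\rho_j = u_j$ is constant and $E[\pjA]$ is increasing in $p_j$, so it suffices to check the largest admissible value $p_j = u_j$. Normalizing $t_j = 1$ and writing $r = r_j$, substituting the definition of $p$ and clearing the denominator $r^2-r+1>0$, the first inequality reduces to $\tfrac{r^2}{r^2-r+1} \le \tfrac43$ and the second to $\tfrac{r-1}{r^2-r+1} \le \tfrac13$; both are equivalent to $(r-2)^2 \ge 0$ and hence hold, with equality exactly at $r=2$. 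Summing over all jobs yields $E[\ALG] \le \tfrac43\OPT$.

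\textbf{Lower bound.} For optimality I would invoke Yao's minimax principle on the following random instance: $n$ jobs, each with $t_j = 1$ and $u_j = 2$, with $p_j \in \{0,2\}$ chosen independently and uniformly; note these are legal inputs even in the uniform-testing model of \cite{Duerr2018}. Then $E[\rho_j] = \tfrac12\cdot 1 + \tfrac12\cdot 2 = \tfrac32$, so $E[\OPT] = \tfrac32 n$. For \emph{any} deterministic algorithm the decision whether to test job $j$ is made before $p_j$ is revealed, hence is independent of $p_j$; a tested job contributes $t_j + E[p_j] = 2$ in expectation and an untested one $u_j = 2$, so $E[\ALG] = 2n$ regardless of the algorithm, i.e.\ $E[\ALG]/E[\OPT] = \tfrac43$. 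Since $\OPT = \sum_j \rho_j$ is a sum of $n$ i.i.d.\ $\{1,2\}$-valued variables it concentrates around $\tfrac32 n$ while $\ALG$ stays in $[n,3n]$; a short Hoeffding argument then upgrades $E[\ALG]/E[\OPT] = \tfrac43$ to $E[\ALG/\OPT] \ge \tfrac43 - o_n(1)$, so by Yao no randomized algorithm is better than $(\tfrac43-o_n(1))$-competitive for every $n$, and hence none beats $\tfrac43$. (Since the whole construction lies in the uniform model, one may instead simply cite the matching $\tfrac43$ lower bound of \cite{Duerr2018}.)

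\textbf{Main obstacle.} The upper-bound computation is routine once one notices that the objective decouples over jobs and that only the endpoints $p_j \in \{0,u_j\}$ matter; the probability $p(r_j) = 1-1/(r_j^2-r_j+1)$ is exactly the one that equalizes the two worst cases, which is why both inequalities collapse to $(r-2)^2\ge 0$. The genuinely delicate point is the lower bound: a single-job instance yields only a ratio of $\tfrac54$ in the expected-ratio form of Yao because the two possible optima ($1$ and $2$) differ, so one really needs many independent jobs to make $\OPT$ concentrate and thereby identify "ratio of expectations" with "expectation of ratio" — phrasing that concentration step cleanly (or circumventing it via \cite{Duerr2018}) is the one place where care is required.
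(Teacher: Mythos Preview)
Your upper-bound argument is essentially the paper's: both reduce to the per-job inequality $E[\pjA]\le\tfrac43\rho_j$, split on whether $\rho_j=t_j+p_j$ or $\rho_j=u_j$, push $p_j$ to its extremal value ($0$ resp.\ $u_j$), and land on $\tfrac{r^2}{r^2-r+1}\le\tfrac43$. Your remark that both cases collapse to $(r-2)^2\ge0$ is a clean way to see why $r=2$ is the unique tight point.

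For the lower bound your $n$-job-plus-Hoeffding route is correct but unnecessarily elaborate, and your stated reason for needing it is based on a misconception. You apply Yao in the expectation-of-ratio form $\inf_A E_D[A(I)/\OPT(I)]$, which indeed gives only $5/4$ on a single job. The paper instead uses the ratio-of-expectations form: if some input distribution $D$ satisfies $E_D[A(I)]\ge c\,E_D[\OPT(I)]$ for \emph{every} deterministic $A$, then no randomized algorithm is better than $c$-competitive. (Reason: if $R$ were $c'$-competitive with $c'<c$, then $E_R[R(I)]\le c'\,\OPT(I)$ for all $I$; integrating over $D$ and using Fubini gives $\inf_A E_D[A(I)]\le c'\,E_D[\OPT(I)]$, a contradiction.) On the single job with $t=1$, $u=2$, $p\in\{0,2\}$ uniformly, both deterministic choices have $E_D[A]=2$ while $E_D[\OPT]=\tfrac32$, so $c=\tfrac43$ falls out immediately---no concentration, no $n\to\infty$. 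Your detour through Hoeffding recovers the same bound, but the ``genuinely delicate point'' you identify is not actually there.
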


\section{Conclusion}
\label{sec:conclusion}

In this paper, we introduced the first algorithms for the problem of scheduling with testing on a single machine with general testing times that arises in the context of settings where a preliminary action can influence cost, duration or difficulty of a task. For the objective of minimizing the sum of completion times, we presented a $4$-approximation for the deterministic case, and a $3.3794$-approximation for the randomized case. If preemption is allowed, we can improve the deterministic result to $3.2361$. We also considered the objective of minimizing the makespan, for which we showed tight ratios of $1.618$ and $4/3$ for the deterministic and randomized cases, respectively.

Our results open promising avenues for future research, in particular tightening the gaps between our ratios and the lower bounds given by the unit case. Based on various experiments using different adversarial behaviour and multiple testing times it seems hard to force the algorithm to make mistakes that lead to worse ratios than those proven in \cite{Duerr2018} for the unit case. We conjecture that in order to achieve better lower bounds, the adversary must make live decisions based on previous choices of the algorithm, in particular depending on how much the algorithm has already tested, run or deferred jobs up to a certain point.

Further interesting directions for future work are the extension of the problem to multiple machines to consider scheduling problems like open shop, flow shop, or other parallel machine settings.

\bibliographystyle{splncs04}
\bibliography{SchedulingExplUncertain}

\clearpage
\appendix

\section{Jobs with Small Upper Limits}
\label{append:small_upper_limits}

In this section, we motivate our approaches by showing why the uniform algorithm cannot be simply extended to the general problem. An important insight of D\"urr \etal was that jobs with small upper bounds can be scheduled immediately without testing at the begin of any competitive algorithm. It turns out this does not generalize to non-uniform testing, which signifies that a new idea is necessary to deal with jobs that have small upper bounds when compared to their testing times. 

D\"urr \etal \cite{Duerr2018} proved for the special case of uniform testing times $t_j \equiv 1$ that \wloss any algorithm claiming competitive ratio $\lambda$ may start by scheduling all jobs $j$ with $u_j < \lambda$ untested in increasing order of $u_j$. Clearly, this statement does not hold directly for the general case, since a single job with $0 < u_j < \lambda$, $p_j = t_j = 0$ must be tested to yield a finite competitive ratio.

It seems intuitive to extend this idea to non-uniform testing by instead considering the ratio $\frac{u_j}{t_j}$ between upper bounds and test times. We show via a short counterexample that for any $\lambda \ge 1$ scheduling all jobs with $\frac{u_j}{t_j} < \lambda$ first leads to an arbitrarily bad result.

Consider the following instance: Given an integer $m$ and a small real number $\eps > 0$, we have $m$ jobs with $u_j = p_j = \lambda$ and $t_j = 1$ for $j = 1,\dots,m$. Clearly, all these jobs lie just over the limit $\lambda$ and are not considered for execution at the beginning of the schedule. Additionally we have a single extra job $j_0$ with $u_0 = m^2$, $t_0 = \frac{m^2}{\lambda} + \eps$, $p_0 = 0$.

An algorithm obeying the small upper limit rule schedules $j_0$ first, since we have $\frac{u_0}{t_0} < \lambda$ because of $\eps > 0$. Afterwards the remaining jobs are scheduled optimally, meaning  the algorithm runs them untested. Since the order of the remaining jobs is irrelavant, we may assume \wloss that the algorithm orders them by $1,\dots,m$.

For the completion time of $j_0$ we get $C_0 = m^2$ and for the other jobs we have $C_j = m^2 + j \cdot \lambda$. In total the value of the algorithm is:
\begin{equation*}
	\begin{aligned}
		\ALG	&= \sum_{j=0}^m C_j\\
				&= C_0 + \sum_{j=1}^m C_j\\
				&= m^2 + \sum_{j=1}^m (m^2 + j \cdot \lambda)\\
				&= m^3 + m^2\left(\frac{\lambda}{2} + 1\right) + m \frac{\lambda}{2}
	\end{aligned}
\end{equation*}

In contrast, the optimal offline schedule starts by running all jobs $j=1,\dots,m$ untested and then, leaving the large job $j_0$ for last, tests and runs it. We note here that the optimum only schedules $j_0$ last if $m^2/\lambda + \eps > \lambda$. We can guarantee this by choosing $m$ large enough.
\begin{equation*}
	\begin{aligned}
		\OPT	&= \sum_{j=0}^m C_j \\
				&= \sum_{j=1}^m C_j + C_0\\
				&= \sum_{j=1}^m (j \cdot \lambda) + \lambda m + \frac{m^2}{\lambda} + \eps + 0\\
				&= m^2\left(\frac{\lambda}{2} + \frac{1}{\lambda}\right) + m\frac{3\lambda}{2} + \eps
	\end{aligned}
\end{equation*}

Now, if we let $m \to \infty$ and $\eps \to 0$, it is clear that
\begin{equation*}
\frac{\ALG}{\OPT} \longrightarrow \infty.
\end{equation*}

The problem with scheduling $j_0$ first is that, while it might be reasonable to run it untested, all $m^2$ small jobs have to wait for it to finish before being scheduled, leading to a non-optimal result. Hence the problematic decision is the \emph{order} of the jobs rather than whether the algorithm tests or not.

The algorithm we propose in Section \ref{sec:deterministic} takes into account the ratio between upper bounds and test times, while additionally making sure that the execution length of both tested and untested jobs is considered in the order of the schedule.

\section{Lower Bounds for $(\alpha,\beta)$-SORT}

We give two concise lower bounds for the performance of $(\alpha,\beta)$-SORT.

First, we show that $(1,1)$-SORT cannot be better than at least $3$-competitive. Consider $n$ jobs with $u_j = p_j = 1$ and $t_j = 1-\eps$ for all jobs $j$. Since $u_j/t_j \ge 1$, $(1,1)$-SORT tests all jobs, and since $t_j < p_j$, it also runs all tests before executing anything. In total we have an algorithmic value of 
\begin{equation*}
\ALG = n^2 (1-\eps) + \frac{n^2}{2} + \frac{n}{2}.
\end{equation*}
Comparatively, the optimal runtime of a job is $\rho_j = 1$. This leads to a value of $\OPT = n^2/2 + n/2$. Therefore we get
\begin{equation*}
\frac{\ALG}{\OPT} = \frac{n^2 (1-\eps) + n^2/2 + n/2}{n^2/2 + n/2} \xrightarrow[\substack{n \to \infty\\ \eps \to 0}]{} 3.
\end{equation*}

Additionally, we can strengthen our analysis by proving a lower bound for the algorithm with general parameter choices $\alpha, \beta \ge 1$. We will show that $(\alpha,\beta)$-SORT is at most $2$-competitive for all such values.

Consider first the case $\alpha > 2$. Then the instance consisting of a single job with $u_j = 2, t_j = 1$ and $p_j = 0$ is executed untested by the algorithm, while the optimum tests the job. We have $\ALG/\OPT = 2$.

For the second case $1 \le \alpha \le 2$, we subdivide into two more cases based on the value of $\beta$. Let us start with $\beta > 2$. Consider for this $n$ jobs with $u_j = 2, t_j = 1, p_j = 2$. Since now $\alpha \le 2$, all jobs are tested by the algorithm. In particular, since $p_j = 2 > \beta = \beta t_j$, all jobs are tested before any executions happen. In total this gives an algorithmic value of 
\begin{equation*}
\ALG = n \cdot n + 2\left(\frac{n^2}{2}+\frac{n}{2}\right).
\end{equation*}

Similarly, the optimal value is $\OPT = 2 \cdot (n^2/2+n/2)$. We receive
\begin{equation*}
\frac{\ALG}{\OPT} = \frac{n^2 + 2 \cdot (n^2/2 + n/2)}{2 \cdot (n^2/2 + n/2)} \xrightarrow[n \to \infty]{} 2.
\end{equation*}

Finally, consider $1 \le \alpha \le 2$ and $\beta \ge 2$. We need two sets of jobs for this. First, we have a set of $n$ jobs with $u_j = \beta, t_j = 1-\eps, p_j = \beta$, similar to the previous instance. Since $u_j = \beta \ge \alpha > \alpha t_j$, these jobs are all tested by the algorithm. Since $p_j = \beta > \beta t_j$, all jobs are tested before any are executed.

Second, we have a set of $m$ jobs with $u_k = M, t_k = 1+\eps, p_k = 0$, where $M$ is some large number that does not play a role in either solutions. Since the upper bound is large, both the optimum and the algorithm test these jobs. Because $\beta t_k > p_j > \beta t_j$ for all jobs $k \in [m], j \in [n]$, we know that the algorithm sorts the executions as follows: First, all jobs in $[n]$ are tested. Then, all these jobs in $[n]$ are executed. Finally, all jobs in $[m]$ are tested and then immediately executed.

In total, this gives us an algorithmic value of
\begin{equation*}
\ALG = (1-\eps)n \cdot (n+m) + \beta \left(\frac{n^2}{2} + \frac{n}{2}\right) + \beta n \cdot m + (1+\eps) \left(\frac{m^2}{2}+\frac{m}{2}\right),
\end{equation*}
while the optimal value is
\begin{equation*}
\OPT = (1+\eps) \left(\frac{m^2}{2}+\frac{m}{2}\right) + (1+\eps) m \cdot n + \beta \left(\frac{n^2}{2} + \frac{n}{2}\right).
\end{equation*}

Now we choose $m = n$ and let $n \to \infty$ and $\eps \to 0$, giving us
\begin{equation*}
\frac{\ALG}{\OPT} \xrightarrow[\substack{n \to \infty \\ \eps \to 0}]{} \frac{3\beta + 5}{\beta + 3}.
\end{equation*}

This ratio is minimal for $\beta \ge 2$ at $\beta = 2$ with a value of $11/5 > 2$, finalizing the proof of the lower bound.

\section{A Simple Algorithm for the Uniform Case}

In the following we present a simpler 2-competitive algorithm for the unit-testing problem as compared to the Threshold algorithm from \cite{Duerr2018}. In the newest version of their complete paper \cite{Duerr2017}, D\"urr \etal add a note about this simpler algorithm, which they call DelayAll, and prove its competitive ratio. We present an alternative proof using our methods to highlight the differences between the proof techniques. The algorithm forces tests for all jobs designated for testing before executing any tested job. Since this leads to the same competitive ratio as the current best-known result, this represents an argument for increasing the relevance of testing further jobs as opposed to executing already tested jobs at any point in the schedule. Note that the optimal parameter choice of $\beta = 1$ in Algorithm \ref{alg:alpha_beta_sort} reflects this as well.

The algorithm first runs jobs with $u_j < 2$ untested, then tests all remaining jobs immediately. Finally, all jobs are run in order of shortest processing time first.

\begin{algorithm}[ht]
	$T \leftarrow \emptyset$, $N \leftarrow \emptyset$\;
	\ForEach{$j\in[m]$}{
		\eIf{$u_j \ge 2$}
		{add $j$ to $T$\;}
		{add $j$ to $N$\;}
	}
	run all jobs $j \in N$ untested\;
	test all jobs $j \in T$\;
	run all $j \in T$ in order of SPT\;
	\caption{Force Testing}
	\label{alg:force_testing}
\end{algorithm}

\begin{lemma}
	Force Testing is a 2-competitive algorithm for the objective of minimizing the sum of completion times in the unit-sized testing case.
\end{lemma}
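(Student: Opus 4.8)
The plan is to follow the same contribution-based accounting scheme used in the proof of Theorem~\ref{thm:1_1_SORT}, but to exploit the special structure of the \emph{Force Testing} algorithm in the unit-testing case, where $t_j \equiv 1$ and the threshold is $\alpha = 2$. First I would order the jobs so that $\rho_1 \ge \dots \ge \rho_n$ for $\rho_j = \min(u_j, 1+p_j)$, so that $\OPT = \sum_{j=1}^n j \cdot \rho_j$ as in \eqref{eq:opt_value}. I would then record the analogue of Proposition~\ref{prop:claims_case_dist} specialized to $t_j = 1$, $\alpha = 2$: for $j \in N$ (i.e. $u_j < 2$) we have $\pjA = u_j \le 2\rho_j$, and for $j \in T$ (i.e. $u_j \ge 2$) we have $\pjA = 1 + p_j = \rho_j$ whenever testing is the optimal choice, and more generally $1 \le \rho_j$, $p_j \le \rho_j$, and $\pjA \le (1 + 1/\alpha)\rho_j = \tfrac32 \rho_j$.

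Next I would compute the contribution $c(k,j)$ of each job $k$ to the completion time $C_j$, using the fixed three-block structure of the schedule: block~1 consists of all untested jobs (in SPT order of $u_j$), block~2 consists of all tests of jobs in $T$, and block~3 consists of the executions of the tested jobs in SPT order of $p_j$. The key observation is that a job $k$ that finishes after $j$ in the algorithm's order contributes at most $\pkA$, while a job $k$ finishing before $j$ contributes to $C_j$ through at most these three pieces: its untested execution $u_k$ if $k \in N$ and $u_k \le u_j$ (so $u_k \le \rho_j$ roughly), its test of size $1 \le \rho_j$ if $k \in T$, and its tested execution $p_k$ if $k \in T$ and $p_k \le p_j$ (so $p_k \le \rho_j$). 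I would then argue a case distinction on whether $j \in N$ or $j \in T$: if $j \in N$, only block~1 jobs with smaller $u$ precede $j$, and each contributes at most $\rho_j$ (using $u_k < 2$ and a comparison with $\rho_j$); if $j \in T$, the jobs preceding $j$'s execution are all of block~1, all of block~2, and the part of block~3 with smaller $p$, and one shows the total per-job contribution among $k \le j$ is at most $2\rho_j$ by pairing the test (size $1$) with either the untested execution or the tested execution of each relevant job. Summing, $C_j \le \sum_{k>j}\pkA + 2 j \rho_j \le \sum_{k>j} 2\rho_k + 2j\rho_j$, and then $\sum_j C_j \le 2\sum_j (j-1)\rho_j + 2\sum_j j\rho_j \le 2 \sum_j j \rho_j \cdot \big(\tfrac{(j-1)}{j} + 1\big)$; being slightly more careful, $\sum_j C_j = 2\sum_j(j-1)\rho_j + 2\sum_j j \rho_j$, which is not obviously $\le 2\,\OPT$, so the accounting must be sharpened.

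The main obstacle, and the place I expect the real work to be, is precisely that naive application of the Contribution Lemma with constant $2$ on \emph{all} $k$ (including $k \le j$) would give a bound of $4\,\OPT$, not $2\,\OPT$; to reach $2$ one must use the forced-testing structure to show that the combined contribution of \emph{all} jobs $k \le j$ is itself only about $\rho_j$ more than the naive $\sum_{k>j}$ term, i.e. essentially that the schedule behaves like an SPT schedule on the quantities $\rho_k$ up to lower-order slack. Concretely I would try to show directly that $C_j \le \sum_{k \le j}\rho_k + \sum_{k=1}^n \big(\pkA - \rho_k\big)^+$-type bound, or more cleanly that the total time spent before $j$ completes is at most $\sum_{k \le j}\rho_k + \sum_k(\pkA-\rho_k) \le \OPT$-contribution plus the "waste'' $\sum_k(\pkA - \rho_k)$, where each waste term is at most $\rho_k$ (since $\pkA \le 2\rho_k$). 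Summing the waste over $k$ and charging it against $\sum_k k\rho_k = \OPT$ gives the second factor. The delicate point is bounding the block~1 untested jobs and block~3 tested executions simultaneously against $\rho_j$ without double counting the unit tests; I would handle this by the SPT orderings within each block, observing that among jobs $k$ with index $> j$ the quantity $\rho_k$ is $\le \rho_j$, so any slack can be absorbed, and that the tests contribute a total of at most $|T \cap \{k \le j\}| \le j$ units, which is at most $j\rho_j$ since $\rho_j \ge 1$ for every tested job $j$. Wrapping these estimates together should yield $\sum_j C_j \le 2\sum_j j\rho_j = 2\,\OPT$.
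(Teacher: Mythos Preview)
Your proposal has a genuine gap. You correctly identify that the contribution accounting you set up, even after the case distinction, yields
\[
C_j \le \sum_{k>j}\pkA + 2j\,\rho_j \le 2\sum_{k>j}\rho_k + 2j\,\rho_j,
\]
which sums to roughly $4\cdot\OPT$, not $2\cdot\OPT$. Your attempt to sharpen this (``waste'' terms, charging tests against $j\rho_j$ via $\rho_j\ge 1$) never closes: you still pay for every test and every execution that precedes $j$, and for $j\in T$ that means \emph{all} of block~1 and \emph{all} of block~2, not just the indices $k\le j$. The bound $c(k,j)\le 2\rho_j$ for $k\le j$ is correct, but it is exactly the Contribution-Lemma bound that produces $4$; nothing in your last paragraph explains how to avoid the second factor of~$2$ coming from the $\sum_{k>j}\pkA$ term.

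The paper's proof sidesteps the contribution machinery entirely. It first invokes Lemma~1 of \cite{Duerr2018} to assume without loss of generality that \emph{every} job has $u_j\ge 2$ and is tested, so $N=\emptyset$. Then the algorithm's cost has the closed form $\ALG = n^2 + \sum_j \pi(j)\,p_j$, where $\pi$ is the SPT rank. Two elementary inequalities finish it: SPT optimality gives $\sum_j \pi(j)p_j \le \sum_j j\,p_j$ (reindexing by the $\rho$-order), and $n^2 \le 2\sum_{j=1}^n j$ turns the test block into $\sum_j 2j$. Combining, $\ALG \le \sum_j j(2+p_j)$, and a two-line check shows $2+p_j \le 2\rho_j$ in both cases $\rho_j = 1+p_j$ and $\rho_j = u_j\ge 2$. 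The reduction to $N=\emptyset$ and the direct formula for $\ALG$ are the ideas you are missing; without them the per-job contribution viewpoint is too coarse by a factor of~$2$.
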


\begin{proof}
	Again, we let $\rho_1 \ge \hdots \ge \rho_n$ denote the ordered optimal running time of jobs $1,\dots,n$. The optimal objective value is given by \eqref{eq:opt_value}. By Lemma 1 of \cite{Duerr2018} we may assume that all jobs fulfill $u_j \ge 2$ and are tested by the algorithm. We define $(\pi(j))_j$ to be the SPT order of the processing times $p_j$, such that $p_{\pi^{-1}(1)} \ge \dots \ge p_{\pi^{-1}(n)}$. This means the job with the largest processing time has index $1$ in this ordering, similar to the order of $\rho_j$. Then the value of the algorithm can be computed as follows: Since all jobs are tested first, an amount of $n \cdot n$ is added to the objective. Afterwards, the last job in the ordering $\pi$ (i.e.\ the job with the shortest processing time) contributes his processing time $n$ times. The second-to-last job contributes his $n-1$ times and so on. In total this gives
	\begin{equation*}
	\ALG = n^2 + \sum_{j \in [n]} \pi(j) \cdot p_j.
	\end{equation*}
	
	The SPT-rule is optimal on a single machine, see e.g.\ \cite{Pinedo2016}. Therefore if we re-order the jobs in the final step of the algorithm, the result can only get worse. We re-order according to the optimal order of $\rho_j$ and receive
	\begin{equation*}
	\sum_{j \in [n]} \pi(j) \cdot p_j \le \sum_{j \in [n]} j \cdot p_j.
	\end{equation*}
	Using $n^2 \le 2 \cdot \sum_{j=1}^n j$, we have
	\begin{equation*}
	\ALG \le \sum_{j \in [n]} j \cdot (2 + p_j).
	\end{equation*}
	
	If $\rho_j = 1+p_j$, then it holds that $2+p_j \le 2(1+p_j) = 2 \rho_j$. Similarly, if $\rho_j = u_j$, then by $u_j \ge 2$ for all jobs we have $2+p_j \le u_j+u_j = 2 \rho_j$. Inserting this into our estimation, we receive
	\begin{equation*}
	\ALG \le \sum_{j \in [n]} j \cdot 2 \rho_j = 2 \cdot \OPT
	\end{equation*}
\end{proof}

This analysis is tight, as can be seen by a simple example of $n$ jobs with large upper bounds and processing times $p_j = 0$. While the algorithm obliviously runs all tests first and has a completion time of $n$ for every job, the optimum tests and immediately runs every job, resulting in a value of $n^2/2 + n/2$. By this example we also see that while this oblivious algorithm has the same theoretical competitive ratio as the Threshold algorithm in \cite{Duerr2018}, there are instances where Threshold clearly performs better.

We also note that the idea of forcing tests cannot be extended to non-unit testing times. In the presence of arbitrarily large testing times, we may not prioritize these tests over potentially small execution times.

\section{Proofs}
\subsection{Proof of Proposition \ref{prop:claims_case_dist}}

\begin{proof}
	~
	\begin{enumerate}[label=(\alph*)]
		\item Let $j \in T$. Assume $\rho_j = t_j+p_j$. Then the result follows immediately from non-negativity of testing and running times. On the other hand, let $\rho_j = u_j$. Then, since $j \in T$ and $\alpha \ge 1$, we have $t_j \le \frac{1}{\alpha} u_j \le \rho_j$. By definition of the upper bound we also have $p_j \le \rho_j$.
		\item Let $j \in T$. By definition $u_j/t_j \ge \alpha$ and $\pjA = t_j + p_j$. If the optimum tests $j$, then $\rho_j = t_j+p_j$ and therefore $\pjA = \rho_j$. If, on the other hand, $\rho_j = u_j$, then:
		\begin{equation*}
		\frac{\pjA}{\rho_j} = \frac{t_j+p_j}{u_j} = \frac{t_j}{u_j} + \frac{p_j}{u_j} \le \frac{1}{\alpha} + 1
		\end{equation*}
		\item Now let $j \in N$ and hence $u_j/t_j < \alpha$ as well as $\pjA = u_j$. If the optimum doesn't test $j$, then $\rho_j = u_j$ and therefore $\pjA = \rho_j$. If it does, then $\rho_j = t_j+p_j$ and:
		\begin{equation*}
		\frac{\pjA}{\rho_j} = \frac{u_j}{t_j+p_j} \le \frac{u_j}{t_j} < \alpha
		\end{equation*}
	\end{enumerate}
\end{proof}

\subsection{Proof of Lemma \ref{lem:contribution_lemma}}
\label{append:contribution_lemma}

\begin{proof}
Let $j \in [n]$. By definition, the completion time $C_j$ is the point in time in the schedule when the entire execution of $j$, including a potential test in our case, is finished. Since our algorithm schedules all jobs on a single machine without waiting times, the completion time is equal to the amount of time the algorithm spends scheduling any job (including $j$ itself) before reaching this point. Hence, by the definition of the contribution:
\begin{equation*}
	C_j = \sum_{k \in [n]} c(k,j)
\end{equation*}
	
Now let additionally $k \in [n]$ and let $c(k,j)$ be the contribution of $k$ to the completion time of $j$. We do a rigorous case distinction over the possible values of $c(k,j)$, depending on the testing status of $j$ and $k$. Consider Figure \ref{fig:case_dist_untested} for an overview of the case distinction for $j \in N$ and Figure \ref{fig:case_dist_tested} for an overview for $j \in T$.

\begin{figure}[hbt]
	\centering
	\tikzstyle{main node}=	[draw,fill=TUMivory,ellipse,inner sep=4pt,very thick,minimum width=30mm,minimum height=10mm,font=\Large]
	\tikzstyle{end node}=	[draw,fill=TUMblue!25,ellipse,inner sep=4pt,minimum width=15mm,font=\Large]
	\tikzstyle{sloped node}=[above,sloped,font=\large]
	\resizebox{\linewidth}{!}{
	\begin{tikzpicture}[>=latex']
	\node[main node] (m1) at (0,0) {$k \in N$};
	\node[end node] (1) at (-1.5,-3) {$u_k$};
	\node[end node] (2) at (1.5,-3) {$0$};
	\draw[->,thick] (m1) -- node[style=sloped node]{$u_k \le u_j$} (1);
	\draw[->,thick] (m1) -- node[style=sloped node]{$u_k > u_j$} (2);
	\node [below right=0mm and 0mm of 1] {1.};
	\node [below right=0mm and 0mm of 2] {2.};
	\node[main node] (m2) at (8,0) {$k \in T$};
	\node[main node] (m3) at (7,-3) {$k$ tested};
	\node[end node] (3) at (6,-6) {$t_k+p_k$};
	\node[end node] (4) at (9,-6) {$t_k$};
	\node[end node] (5) at (10,-3) {$0$};
	\draw[->,thick] (m2) -- node[style=sloped node]{$\beta t_k \le u_j$} (m3);
	\draw[->,thick] (m2) -- node[style=sloped node]{$\beta t_k > u_j$} (5);
	\draw[->,thick] (m3) -- node[style=sloped node]{$p_k \le u_j$} (3);
	\draw[->,thick] (m3) -- node[style=sloped node]{$p_k > u_j$} (4);
	\node [below right=0mm and 0mm of 3] {3.};
	\node [below right=0mm and 0mm of 4] {4.};
	\node [below right=0mm and 0mm of 5] {5.};
	\end{tikzpicture}
	}
	\caption{Case distinction for contribution to job $j \in N$. The connecting lines are labeled with the relation between the involved parameters. The values with blue background correspond to $c(k,j)$ and are numbered from 1 to 5 distinct cases.}
	\label{fig:case_dist_untested}
\end{figure}
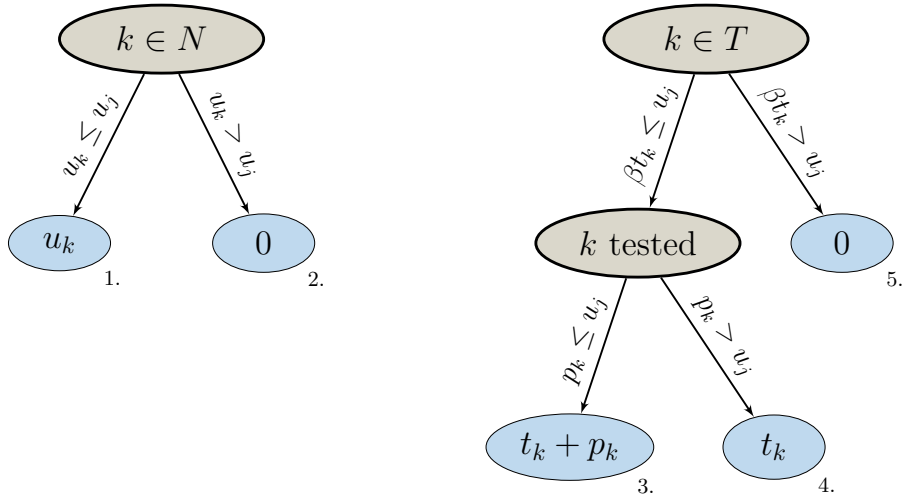

We start with $j \in N$. The cases are numbered exactly according to Figure \ref{fig:case_dist_untested}.
\begin{enumerate}
	\item $k \in N$, $u_k \le u_j$: The contribution is $c(k,j) = u_k \le u_j = \pjA \le \alpha \rho_j$ by Prop.~\ref{prop:claims_case_dist}(c).
	\item $k \in N$, $u_k > u_j$: $c(k,j) = 0$.
	\item $k \in T$, $\beta t_k \le u_j, p_k \le u_j$: $c(k,j) = t_k+p_k \le \left(1+\frac{1}{\beta}\right) u_j \le \left(1+\frac{1}{\beta}\right) \alpha \rho_j$ by Prop.~\ref{prop:claims_case_dist}(c).
	\item $k \in T$, $\beta t_k \le u_j, p_k > u_j$: $c(k,j) = t_k \le \frac{1}{\beta} u_j \le \frac{1}{\beta} \alpha \rho_j$ by Prop.~\ref{prop:claims_case_dist}(c).
	\item $k \in T$, $\beta t_k > u_j$: $c(k,j) = 0$.
\end{enumerate}
By taking the maximum over the above cases, we know that for $j \in N$ and any $k \in [n]$:
\begin{equation}
\label{eq:contr_bd_untested}
c(k,j) \le \left(1+\frac{1}{\beta}\right) \alpha \rho_j.
\end{equation}

\begin{figure}[hbt]
	\centering
	\tikzstyle{main node}=	[draw,fill=TUMivory,ellipse,inner sep=4pt,very thick,minimum width=30mm,minimum height=10mm,font=\Large]
	\tikzstyle{end node}=	[draw,fill=TUMblue!25,ellipse,inner sep=4pt,minimum width=15mm,font=\Large]
	\tikzstyle{sloped node}=[above,sloped,font=\large]
	\resizebox{\linewidth}{!}{
		\begin{tikzpicture}[>=latex]
		\node[main node] (m1) at (1,0) {$k \in N$};
		\node[main node] (m2) at (2,-3) {$j$ tested};
		\node[end node] (1) at (-1,-3) {$u_k$};
		\node[end node] (2) at (0,-6) {$u_k$};
		\node[end node] (3) at (3,-6) {$0$};
		\draw[->,thick] (m1) -- node[style=sloped node]{$u_k \le \beta t_j$} (1);
		\draw[->,thick] (m1) -- node[style=sloped node]{$u_k > \beta t_j$} (m2);
		\draw[->,thick] (m2) -- node[style=sloped node]{$u_k \le p_j$} (2);
		\draw[->,thick] (m2) -- node[style=sloped node]{$u_k > p_j$} (3);
		\node [below right=0mm and 0mm of 1] {1.};
		\node [below right=0mm and 0mm of 2] {2.};
		\node [below right=0mm and 0mm of 3] {3.};
		\node[main node] (m3) at (12,0) {$k \in T$};
		\node[main node] (m4) at (9.5,-3) {$k$ tested};
		\node[main node] (m5) at (14.5,-3) {$j$ tested};
		\node[main node] (m6) at (10,-6) {$j$ tested};
		\node[main node] (m7) at (14,-6) {$k$ tested};
		\node[end node] (4) at (6.5,-6) {$t_k+p_k$};
		\node[end node] (5) at (7,-9) {$t_k+p_k$};
		\node[end node] (6) at (10.5,-9) {$t_k$};
		\node[end node] (7) at (13.5,-9) {$t_k+p_k$};
		\node[end node] (8) at (17,-9) {$t_k$};
		\node[end node] (9) at (17.5,-6) {$0$};
		\draw[->,thick] (m3) -- node[style=sloped node]{$t_k \le t_j$} (m4);
		\draw[->,thick] (m3) -- node[style=sloped node]{$t_k > t_j$} (m5);
		\draw[->,thick] (m4) -- node[style=sloped node]{$p_k \le \beta t_j$} (4);
		\draw[->,thick] (m4) -- node[style=sloped node]{$p_k > \beta t_j$} (m6);
		\draw[->,thick] (m5) -- node[style=sloped node]{$\beta t_k \le p_j$} (m7);
		\draw[->,thick] (m5) -- node[style=sloped node]{$\beta t_k > p_j$} (9);
		\draw[->,thick] (m6) -- node[style=sloped node]{$p_k \le p_j$} (5);
		\draw[->,thick] (m6) -- node[style=sloped node]{$p_k > p_j$} (6);
		\draw[->,thick] (m7) -- node[style=sloped node]{$p_k \le p_j$} (7);
		\draw[->,thick] (m7) -- node[style=sloped node]{$p_k > p_j$} (8);
		\node [below right=0mm and 0mm of 4] {4.};
		\node [below right=0mm and 0mm of 5] {5.};
		\node [below right=0mm and 0mm of 6] {6.};
		\node [below right=0mm and 0mm of 7] {7.};
		\node [below right=0mm and 0mm of 8] {8.};
		\node [below right=0mm and 0mm of 9] {9.};
		\end{tikzpicture}
	}
	\caption{Case distinction for contribution to job $j \in T$. The connecting lines are labeled with the relation between the involved parameters. The values with blue background correspond to $c(k,j)$ and are numbered from 1 to 9 distinct cases.}
	\label{fig:case_dist_tested}
\end{figure}

Now consider $j \in T$, the cases are numbered as seen in Figure \ref{fig:case_dist_tested}.
\begin{enumerate}
	\item $k \in N$, $u_k \le \beta t_j$: The contribution is $c(k,j) = u_k \le \beta t_j \le \beta \rho_j$ by Prop.~\ref{prop:claims_case_dist}(a).
	\item $k \in N$, $u_k > \beta t_j, u_k \le p_j$: $c(k,j) = u_k \le p_j \le \rho_j$ by Prop.~\ref{prop:claims_case_dist}(a).
	\item $k \in N$, $u_k > \beta t_j, u_k > p_j$: $c(k,j) = 0$.
	\item $k \in T$, $t_k \le t_j, p_k \le \beta t_j$: $c(k,j) = t_k+p_k \le (1+\beta) t_j \le (1+\beta) \rho_j$ by Prop.~\ref{prop:claims_case_dist}(a).
	\item $k \in T$, $t_k \le t_j, p_k > \beta t_j, p_k \le p_j$: $c(k,j) = t_k+p_k \le t_j+p_j = \pjA \le \left(1+\frac{1}{\alpha}\right) \rho_j$ by Prop.~\ref{prop:claims_case_dist}(b).
	\item $k \in T$, $t_k \le t_j, p_k > \beta t_j, p_k > p_j$: $c(k,j) = t_k \le t_j \le \rho_j$ by Prop.~\ref{prop:claims_case_dist}(a).
	\item $k \in T$, $t_k > t_j, \beta t_k \le p_j, p_k \le p_j$: $c(k,j) = t_k+p_k \le \left(1+\frac{1}{\beta}\right) p_j \le \left(1+\frac{1}{\beta}\right) \rho_j$ by Prop.~\ref{prop:claims_case_dist}(a).
	\item $k \in T$, $t_k > t_j, \beta t_k \le p_j, p_k > p_j$: $c(k,j) = t_k \le \frac{1}{\beta} p_j \le \frac{1}{\beta} \rho_j$ by Prop.~\ref{prop:claims_case_dist}(a).
	\item $k \in T$, $t_j > t_j, \beta t_k > p_j$: $c(k,j) = 0$.
\end{enumerate}
We again take the maximum over all cases, which yields that for $j \in T$ and any $k \in [n]$:
\begin{equation}
\label{eq:contr_bd_tested}
c(k,j) \le \max\left(1+\frac{1}{\alpha},1+\beta,1+\frac{1}{\beta}\right) \rho_j
\end{equation}

Combining equations \eqref{eq:contr_bd_untested} and \eqref{eq:contr_bd_tested} we achieve our desired bound.
\begin{equation*}
	\begin{aligned}
		c(k,j) 	&\le \max\left(\left(1+\frac{1}{\beta}\right) \alpha,1+\frac{1}{\alpha},1+\beta,1+\frac{1}{\beta}\right) \rho_j \\
				&= \max\left(\left(1+\frac{1}{\beta}\right) \alpha,1+\frac{1}{\alpha},1+\beta\right) \rho_j
	\end{aligned}
\end{equation*}
\end{proof}

\subsection{Proof of Theorem \ref{thm:det_preempt}}
\label{append:det_preemption}

\begin{proof}
	As before, we let $\rho_1 \ge \hdots \ge \rho_n$ denote the ordered optimal running times of jobs $1,\dots,n$. The optimal objective value is given by \eqref{eq:opt_value}. The decision of which jobs to test is exactly the same in Golden Round Robin as in the original $(\alpha,\beta)$-SORT. By Proposition \ref{prop:claims_case_dist}(b),(c) the algorithmic running time \eqref{eq:alg_proc_time} of all jobs is bounded by
	\begin{equation*}
	\pjA \le \max\left(\alpha,1+\frac{1}{\alpha}\right) \rho_j.
	\end{equation*}
	Minimizing this upper bound makes it clear why we set $\alpha = \varphi$ in the Golden Round Robin algorithm. We have
	\begin{equation}
	\label{eq:golden_ratio_estimate_proof_GRR}
	\pjA \le \varphi \cdot \rho_j.
	\end{equation}
	
	Similar to the proof of $(1,1)$-SORT, we now distinguish between how much $k$ contributes to the completion time $C_j$ for any two jobs $k,j$.
	
	Contrary to before, the contribution now only depends on the algorithmic running times of $k$ and $j$: If $\pkA \ge \pjA$ then $j$ finishes first in the Round Robin Scheme and $k$ contributes exactly $\pjA$ to the completion time of $j$. Otherwise, if $\pkA < \pjA$ then $k$ is done first and contributes its entire running time $\pkA$. We define $J_j^1 := \{k \in [n]: \pkA \ge \pjA\}$ and $J_j^2 := \{k \in [n]: \pkA < \pjA\}$. Then:
	\begin{equation*}
	C_j = \sum_{k \in J_j^1} \pjA + \sum_{k \in J_j^2} \pkA
	\end{equation*}
	
	Just as before, we divide the jobs into 'good' ($k > j$) and 'bad' ($k \le j$) jobs. For the good jobs, we use $\pkA$ and for the bad jobs we instead use $\pjA$. We use the properties of the sets $J_j^1$ and $J_j^2$ to estimate as follows:
	\begin{equation*}
	\begin{aligned}
	C_j 	&= \sum_{\substack{k \in J_j^1\\k>j}} \pjA + \sum_{\substack{k \in J_j^1\\k \le j}} \pjA + \sum_{\substack{k \in J_j^2\\k>j}} \pkA + \sum_{\substack{k \in J_j^2\\k \le j}} \pkA\\
	&\le \sum_{\substack{k \in J_j^1\\k>j}} \pkA + \sum_{\substack{k \in J_j^1\\k \le j}} \pjA + \sum_{\substack{k \in J_j^2\\k>j}} \pkA + \sum_{\substack{k \in J_j^2\\k \le j}} \pjA\\
	&= \sum_{k > j} \pkA + \sum_{k \le j} \pjA\\
	&= \sum_{k > j} \pkA + j \cdot \pjA\\
	\end{aligned}
	\end{equation*}
	
	For the sum of completion times, we receive
	\begin{equation*}
	\begin{aligned}
	\sum_{j=1}^n C_j 	&\le \sum_{j=1}^n \sum_{k=j+1}^n \pkA + \sum_{j=1}^n j \cdot \pjA\\
	&= \sum_{j=1}^n (j-1) \pjA + \sum_{j=1}^n j \cdot \pjA\\
	&\le 2 \sum_{j=1}^n j \cdot \pjA\\
	&\le 2 \varphi \sum_{j=1}^n j \cdot \rho_j\\
	&= 2 \varphi \cdot \OPT,\\
	\end{aligned}
	\end{equation*}
	where the last inequality follows from \eqref{eq:golden_ratio_estimate_proof_GRR}.

	We also show that this analysis of the Golden Round Robin algorithm is tight. For this, consider $n$ jobs with $u_j = p_j = 1$ and $t_j = 1/\varphi$ for all jobs. Since $u_j/t_j \ge \varphi$, the algorithm tests all jobs and therefore runs a round robin scheme on $n$ jobs with runtime $\pjA = 1+1/\varphi = \varphi$. We receive $\ALG = n^2 \varphi$. On the other hand, OPT does not test anything and has a total value of $\OPT = n^2/2 + n/2$. The final ratio is then
	\begin{equation*}
	\frac{\ALG}{\OPT} = \frac{n^2 \varphi}{n^2/2+n/2} \xrightarrow[n \to \infty]{} 2 \varphi.
	\end{equation*}
\end{proof}

\subsection{Proof of Theorem \ref{thm:lower_bound_preempt}}
\label{append:lower_bound_preempt}

\begin{proof}
	We prove the theorem by reducing a worst case scenario in the preemptive setting to the worst case provided for the non-preemptive setting in \cite{Duerr2018}. More specifically, we prove that given a preemptive algorithm for the adversarial scenario as defined in Chapter 3.2 of \cite{Duerr2018}, there exists a non-preemptive algorithm with competitive ratio at least as good as the given preemptive algorithm. Similarly, the optimal offline algorithm is always non-preemptive.
	
	We are given an instance with $t_j = 1$ and $u_j = \ubar \ge 1$ for all jobs. The adversarial strategy in \cite{Duerr2018} fixes a parameter $\delta \in [0,1]$ and then decides the runtime $p_j$ of all jobs as follows: 
	\begin{itemize}
		\item If a job is executed untested by the algorithm, set $p_j = 0$.
		\item If a job is tested by the algorithm, set $p_j = u_j$.
		\item If the number of jobs already decided is larger than $\delta n$, then always set $p_j = 0$.
	\end{itemize}
	This strategy can be easily extended to the preemptive case, where we just have to make sure that once an algorithm decides to test a job, it may not retract this decision later in order to deceive the adversary. If an algorithm starts testing a job (or execute it untested) even for a very small amount of time, it must continue to abide by this decision.
	
	Assume now that we are given a schedule produced by an algorithm $\ALG^{pre}$ which may be preemptive. We know that all jobs run untested by $\ALG^{pre}$ have processing time $p_j = 0$ as well as all jobs that have been decided after more than $\delta n$ other jobs have already been fixed. The rest of the jobs have processing time $p_j = u_j$.
	
	We fix an ordering of the \emph{execution instances} of the algorithm, which are defined as the exact points in time where the algorithm finishes either an untested execution, a test, or the execution of an already tested job.
	
	We then define a new algorithm $\ALG^*$, which will be non-preemptive, by the following rule: Go through the schedule of $\ALG^{pre}$ starting at time $0$. Whenever you encounter an execution instance in the schedule of $\ALG^{pre}$, schedule the corresponding execution or test completely without preemption in $\ALG^*$.
	
	By definition, the ordering of the execution instances stays the same from $\ALG^{pre}$ to $\ALG^*$. Therefore the completion time of any job can only get better, i.e.\ $C_j^* \le C_j^{pre}$. Additionally, the exact same set of jobs is tested as before and the set of $\delta n$ jobs that is decided first by the adversary is unchanged. Hence, the behavior of the adversary is the same for both algorithms and the optimal schedule does not change.
	
	Combining these arguments, we know that the competitive ratio of the non-preemptive algorithm can not be worse than that of the preemptive version:
	\begin{equation*}
	\frac{\ALG^*}{\OPT} \le \frac{\ALG^{pre}}{\OPT}
	\end{equation*}
	
	To complete the proof, we cite section 3.2 of \cite{Duerr2018}, where it is proven that for instances with $t_j = 1, u_j = \ubar \ge 1$ with an adversarial strategy defined as above there exists no deterministic algorithm with competitive ratio better than 1.8546. We have just proven that for all preemptive algorithms in this setting there always exists a non-preemptive algorithm with the same or better competitive ratio. Therefore this lower bound also holds for the preemptive version of the problem.
\end{proof}

\subsection{Proof of Lemma \ref{lem:min_max_rand}}
\label{append:min_max_rand}

\begin{proof}
Let $\lambda_j(\beta,p)$ be defined as in the proof of Theorem \ref{thm:rand_sort}. We want to find values $\hat{\beta}$ and $\hat{p}$, s.t.\ $\max_j \frac{\lambda_j (\hat{\beta}, \hat{p})}{\rho_j}$ is as small as possible.
	
Because $\rho_j = \min(u_j,t_j+p_j)$ is difficult to handle, we have to do a case distinction on its value. We start with $\rho_j = u_j$. Recall that $p_j \le u_j$ and that we defined $r = u_j/t_j$. For better readability, we drop the index $j$ of $r_j$ during this computation. We have 
\begin{equation*}
	\begin{aligned}
		\frac{\lambda_j(\beta,p)}{\rho_j}	&= \frac{u_j + \left(1+\frac{1}{\beta}\right) u_j}{u_j} \ (1-p)\\
											&\qquad + \frac{t_j+p_j + \max\left( (1+\beta) t_j, \left(1+\frac{1}{\beta}\right) p_j, t_j+p_j \right)}{u_j} \ p \\
											&\le \left(1 + 1 + \frac{1}{\beta}\right) (1-p) + \left( \frac{1}{r} + 1 + \max\left( (1+\beta) \frac{1}{r}, 1 + \frac{1}{\beta}, \frac{1}{r} + 1 \right) \right) p \\
											&= \left( \frac{1}{r} - 1 - \frac{1}{\beta} + \max\left( (1+\beta)\frac{1}{r}, 1 + \frac{1}{\beta}, 1 + \frac{1}{r} \right) \right) p + 2 + \frac{1}{\beta},\\
	\end{aligned}
\end{equation*}
which is a linear function in $p$ with strictly positive slope. Now let us look at $\rho_j = t_j+p_j$. We utilize $p_j \ge 0$ and $t_j \ge 0$ and write
\begin{equation*}
	\begin{aligned}
		\frac{\lambda_j(\beta,p)}{\rho_j}	&= \frac{u_j + \left(1+\frac{1}{\beta}\right) u_j}{t_j+p_j} \ (1-p)\\
											&\qquad + \frac{t_j+p_j + \max\left( (1+\beta) t_j, \left(1+\frac{1}{\beta}\right) p_j, t_j+p_j \right)}{t_j+p_j} \ p \\
											&\le \left(r + \left(1+\frac{1}{\beta}\right) r \right) (1-p) + \left( 1 + \max\left( 1+\beta, 1+\frac{1}{\beta}, 1 \right) \right) p \\
											&= \left(2 + \beta - \left(2 + \frac{1}{\beta}\right) r \right) p + \left(2 + \frac{1}{\beta}\right) r, \\
	\end{aligned}
\end{equation*}
where we used $\beta \ge 1$ in the last step. This is a linear function in $p$ with strictly negative slope if $r > \frac{2+\beta}{2+1/\beta}$. We observe that $r$ is the only parameter dependent on $j$ in either of these formulas. Therefore taking the maximum over all $j$ as required is equivalent to taking the maximum over values of $r \ge 1$.

We are interested in minimizing the maximum of these two functions over $0 \le p \le 1$. The minimal maximum of two linear function where one is strictly increasing and the other strictly decreasing is attained at their intersection. In the cases where the second function is in fact increasing, the minimal maximum may be attained somewhere else and have a smaller value, but we can ignore this and still use the intersection (as long as the intersection point lies in $[0,1]$), because this can only provide a worse upper bound.

We use the mathematical solver Mathematica for some of the following computations. Please consult Appendix \ref{append:mathematica_code} or webpage \cite{Albers2020} for the complete Mathematica code for this section. Using this code, we compute the intersection and receive the following value for $p$:
\begin{equation}
\label{eq:p_solution}
p(r) = \frac{r^2 + 2 \beta r^2 - r - 2 \beta r }{r^2 + 2 \beta r^2 - r - 3 \beta r - \beta^2 r + \beta + \beta r \max\left( (1+\beta)\frac{1}{r}, 1 + \frac{1}{\beta}, 1 + \frac{1}{r} \right)}
\end{equation}
For most feasible values of $\beta$ and $r$, this fraction lies between $[0,1]$, but not for all. We will have to make sure that our final choice of $p$ fulfills this requirement.

We insert this value for $p$ into either of the above linear functions. Afterwards the result depends only on $r$ and $\beta$. Since $r$ is our job parameter, we must consider the worst case (i.e.\ the maximum case) in dependence of $r \ge 1$. Therefore we run a parameter search for $\beta$, such that this worst-case value is minimized. See the Mathematica code for the exact computation. The result of the search was $\beta \approx 1.2574$ with a value of approximately $3.3794$ and a maximizing value of $r \approx 1.4386$.

We therefore want to choose $\beta = \hat{\beta} := 1.2574$ and $p(r)$ as in equation \eqref{eq:p_solution}. For this value of $\beta$, the definition of $p(r)$ is non-negative for all $r \ge 1$. However, for some values of $r$ the definition is larger than $1$. This is obviously not admissible and therefore we choose
\begin{equation*}
\hat{p}(r) := \min(p(r),1).
\end{equation*}
Consult Figure \ref{fig:graph_p_phat} for an illustration of this definition.

\begin{figure}[tbh]
	\centering
	\includegraphics{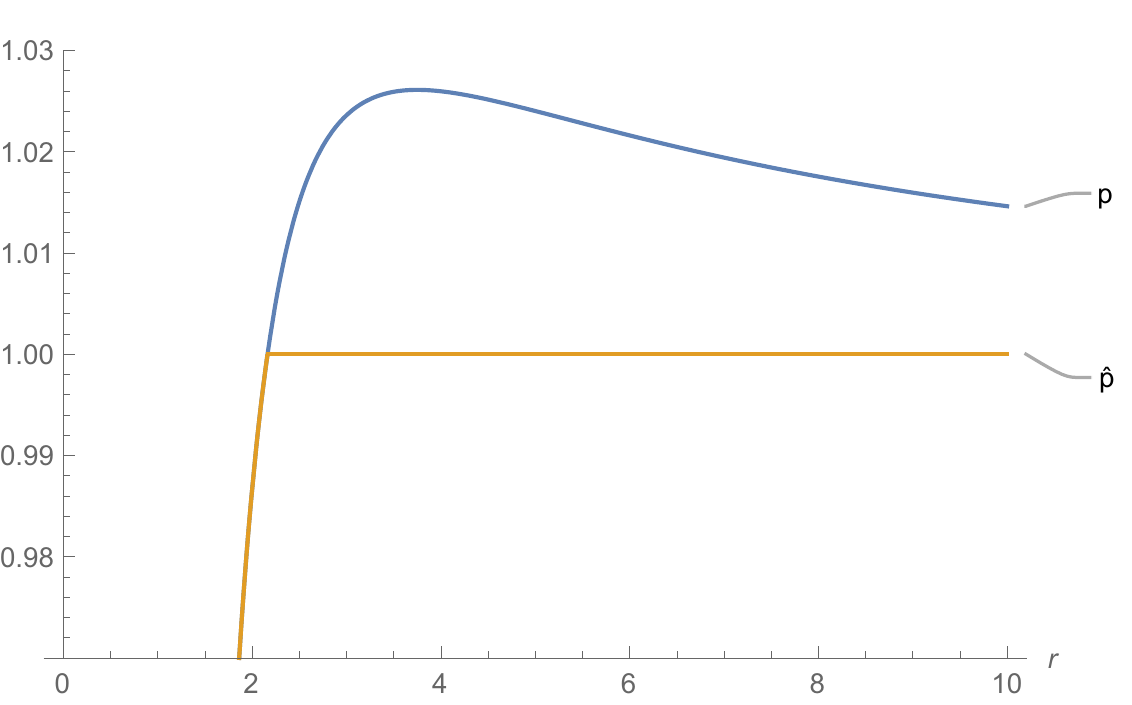}
	\caption{Graphs of $p$ and $\hat{p}$. For values larger than $\hat{r} \approx 2.1637$ the value of $p$ exceeds $1$.}
	\label{fig:graph_p_phat}
\end{figure}

Since we now restricted the choice of $\hat{p}$ for some values of $r$ to $1$, we need to make sure that the maximum value of our two functions for $p=1$ in these cases is smaller than the upper bound already provided. Otherwise our worst case estimate would increase. Using Mathematica, we determine that the value of $p(r)$ is only larger than 1 for values $r > \hat{r} \approx 2.1637$. We again consult Mathematica to compute
\begin{equation*}
\max_{r > \hat{r}} \frac{\lambda_j(\hat{\beta},1)}{\rho_j} \lessapprox 3.2574,
\end{equation*}
which is smaller than $3.3794$. Therefore
\begin{equation*}
	\begin{aligned}
		\max_{r \ge 1} \frac{\lambda_j(\hat{\beta},\hat{p}(r))}{\rho_j} &= \max\left( \max_{1 \le r \le \bar{r}} \frac{\lambda_j(\hat{\beta},p(r))}{\rho_j}, \max_{r > \hat{r}} \frac{\lambda_j(\hat{\beta},1)}{\rho_j} \right)\\
																		&\lessapprox \max(3.3794, 3.2574)\\
																		&= 3.3794.\\
	\end{aligned}
\end{equation*}

This concludes the proof.
\end{proof}

\subsection{Proof of Theorem \ref{thm:det_makespan}}

\begin{proof}
	Since every job contributes the same amount to the objective regardless of where in the schedule it is placed, we can assume worst case instances consist of only a single job. This statement is formalized in Lemma 20 of the full version of D\"urr \etal \cite{Duerr2017}. We therefore consider a single job j with upper bound $u_j$, processing time $p_j$ and testing time $t_j$.
	
	As seen in some of the previous proofs, a case distinction on the value of $\OPT = \rho_j = \min(u_j,t_j+p_j)$ is usually a good strategy. Therefore consider first $\OPT = u_j$. If the algorithm tests $j$, then by definition $\ALG = t_j+p_j$ as well as $r_j \ge \varphi$. Hence
	\begin{equation*}
	\frac{\ALG}{\OPT} = \frac{t_j+p_j}{u_j} \le \frac{1}{r_j} + 1 \le 1 + \frac{1}{\varphi} = \varphi,
	\end{equation*}
	where we used $p_j \le u_j$ and the defining property of the golden ratio. If on the other hand, the algorithm does not test $j$, then $\ALG = u_j = \OPT$.
	
	Now consider $\OPT = t_j+p_j$. If the algorithm tests $j$, then $\ALG = t_j+p_j = \OPT$. If it does not, then we have $\ALG = u_j$ as well as $r_j < \varphi$ and therefore
	\begin{equation*}
	\frac{\ALG}{\OPT} = \frac{u_j}{t_j+p_j} \le \frac{u_j}{t_j} = r_j \le \varphi,
	\end{equation*}
	where we used $p_j \ge 0$.
	
	It remains to show that no algorithm can achieve a better ratio. For this we use the proof from D\"urr \etal \cite{Duerr2018}, which we include for completeness. Consider an instance with a single job $j$ with $u_j = \varphi, t_j = 1$. An algorithm that tests the job has competitive ratio $\varphi$ if the adversary sets $p_j = 0$. An algorithm that doesn't test $j$ has competitive ratio $1+1/\varphi = \varphi$ for the case $p_j = u_j$.
\end{proof}

\subsection{Proof of Theorem \ref{thm:rand_makespan}}

\begin{proof}
	Consider a worst case instance consisting of a single job $j$ with upper bound $u_j$, processing time $p_j$, and testing time $t_j$. We can compute the expected value of the algorithmic solution:
	\begin{equation*}
	E[\ALG] = (t_j + p_j) \cdot p + u_j \cdot (1-p)
	\end{equation*}
	
	Again, we do a case distinction. If $\OPT = u_j$ then
	\begin{equation*}
	\frac{E[\ALG]}{\OPT} = \frac{(t_j + p_j) \cdot p + u_j \cdot (1-p)}{u_j} \le \left(\frac{1}{r_j} + 1\right) \cdot p + 1-p.
	\end{equation*}
	Otherwise, if $\OPT = t_j + p_j$ then
	\begin{equation*}
	\frac{E[\ALG]}{\OPT} = \frac{(t_j + p_j) \cdot p + u_j \cdot (1-p)}{t_j+p_j} \le p + r_j \cdot (1-p).
	\end{equation*}
	
	To achieve a good competitive ratio, we want to minimize the maximum of these two functions. We again do this by computing their intersection point as in the proof of Lemma \ref{lem:min_max_rand}. To simplify presentation, we only show that the resulting value for $p$ is indeed optimal: We insert $p = 1-1/(r_j^2-r_j+1)$ into both expressions and receive after a bit of algebra:
	\begin{equation*}
	\frac{E[\ALG]}{\OPT} \le \frac{r_j^2}{r_j^2-r_j+1}.
	\end{equation*}
	This function is maximized at $r_j = 2$ with value $4/3$.
	
	To proof optimality of this algorithm we provide a randomized instance where every deterministic algorithm is at most $4/3$-competitive. This proof is again adopted from D\"urr \etal \cite{Duerr2018}. Consider a single job instance with $u_j = 2, t_j = 1$ that has $p_j = 0$ and $p_j = 2$ both with probability $1/2$. Both the deterministic algorithm that tests the job and the one that doesn't test have expected makespan $2$. The optimal solution has an expected value of $3/2$. Therefore every deterministic algorithm is at most $4/3$-competitive. Applying Yao's principle \cite{Yao1977} gives the desired result.
\end{proof}

\section{Mathematica Code for the Randomized Algorithm}
\label{append:mathematica_code}
See appended pages.

\newpage
\includepdf[pages=-]{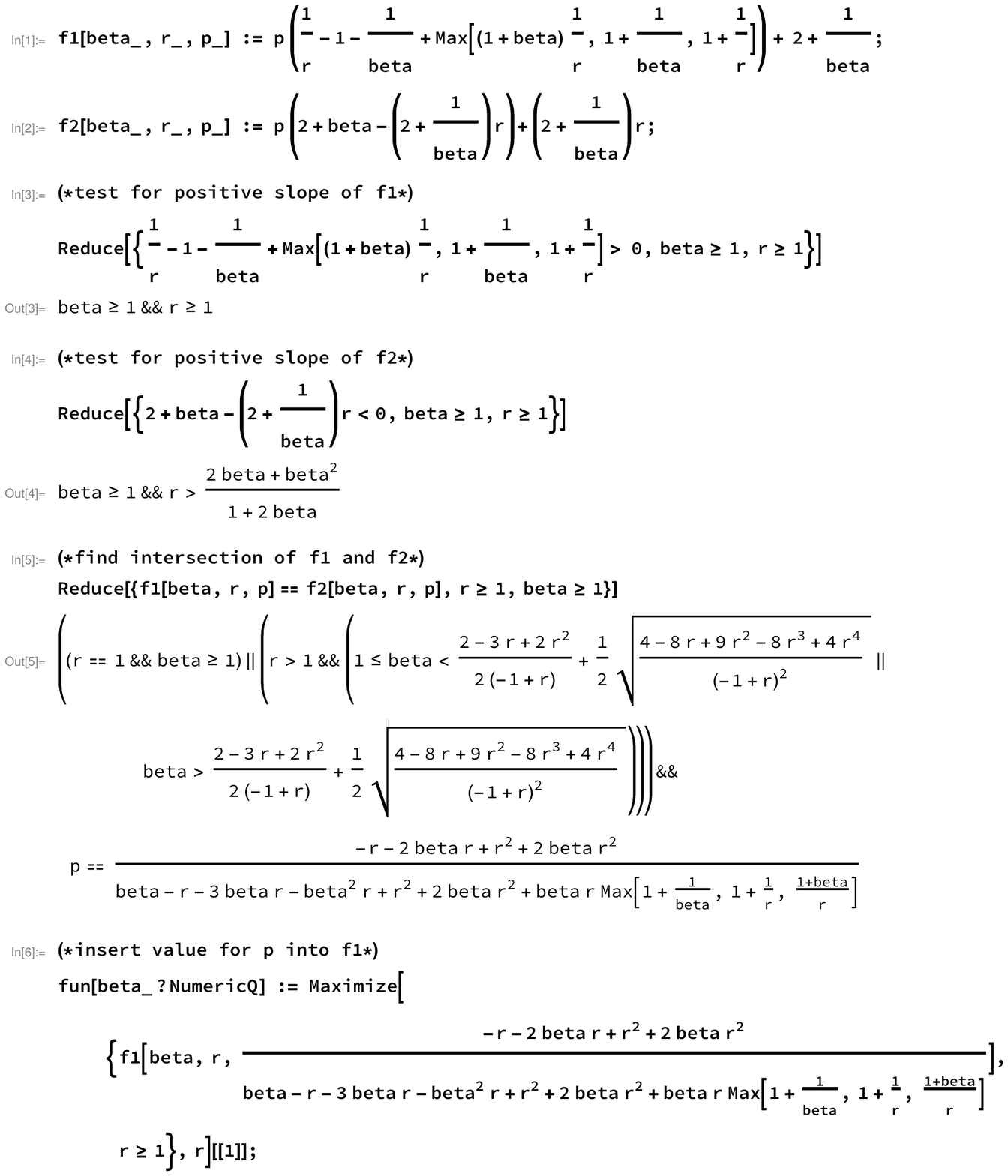}

\end{document}